\theoremstyle{plain}
\newtheorem{proposition}{Proposition}
\theoremstyle{definition}
\newtheorem{exmpl}{Example}
\newcommand{\set}[2]{\{ #1 \; | \; #2 \}}
\newcommand{\vecp}{{\mathbf p}}
\newcommand{\vecx}{{\mathbf x}}
\newcommand{\ot}{\leftarrow}
\newcommand{\D}{\displaystyle}
\newcommand{\classNP}{{\sf NP}}
\newcommand{\classPPAD}{{\sf PPAD}}
\newcommand{\M}{\mathcal M}
\newcommand{\p}{\mbox{\boldmath $p$}}
\let\oldnl\nl
\newcommand{\nonl}{\renewcommand{\nl}{\let\nl\oldnl}}
\title{Computing Equilibria in Markets with Budget-Additive Utilities}
\author[1]{Xiaohui Bei}
\author[2]{Jugal Garg}
\author[3]{Martin Hoefer}
\author[4]{Kurt Mehlhorn}
\affil[1]{Nanyang Technological University, 
  Singapore.\\ 
  \texttt{xhbei@ntu.edu.sg}}
\affil[2]{University of Illinois at Urbana-Champaign,
  USA.\\
  \texttt{jugal@illinois.edu}}
\affil[3]{MPI f\"ur Informatik and Saarland University,
  Germany. \\
  \texttt{mhoefer@mpi-inf.mpg.de}}
\affil[4]{MPI f\"ur Informatik,
  Germany. \\
  \texttt{mehlhorn@mpi-inf.mpg.de}}
\authorrunning{X. Bei, J. Garg, M. Hoefer, K. Mehlhorn} 
\subjclass{F.2.2 Nonnumerical Algorithms and Problems}
\keywords{Budget-Additive Utility, Market Equilibrium, Equilibrium Computation}
\begin{document}

\maketitle

\begin{abstract}
We present the first analysis of Fisher markets with buyers that have budget-additive utility functions. Budget-additive utilities are elementary concave functions with numerous applications in online adword markets and revenue optimization problems. They extend the standard case of linear utilities and have been studied in a variety of other market models. In contrast to the frequently studied CES utilities, they have a global satiation point which can imply multiple market equilibria with quite different characteristics. Our main result is an efficient combinatorial algorithm to compute a market equilibrium with a Pareto-optimal allocation of goods. It relies on a new descending-price approach and, as a special case, also implies a novel combinatorial algorithm for computing a market equilibrium in linear Fisher markets. We complement this positive result with a number of hardness results for related computational questions. We prove that it is \classNP-hard to compute a market equilibrium that maximizes social welfare, and it is \classPPAD-hard to find any market equilibrium with utility functions with separate satiation points for each buyer and each good.
\end{abstract}


\section{Introduction}
The concept of market equilibrium is a fundamental and well-established notion in economics to analyze and predict the outcomes of strategic interaction in large markets. Initiated by Walras in 1874, the study of market equilibrium has become a cornerstone of microeconomic analysis, mostly due to general results that established existence under very mild conditions~\cite{ArrowD54}. Since efficient computation is a fundamental criterion to evaluate the plausibility of equilibrium concepts, the algorithmic aspects of market equilibrium are one of the central domains in algorithmic game theory. Over the last decade, several new algorithmic approaches to compute market equilibria were discovered. Efficient algorithms based on convex programming
techniques can compute equilibria in a large variety of domains~\cite{CodenottiRChapter07,GoelV11,Jain07}. More importantly, several approaches were
proposed that avoid the use of heavy algorithmic machinery and follow combinatorial strategies~\cite{DevanurPSV08,JainV10,Orlin10,Vegh14STOC,DuanM15,DuanGM16}, or even work as a t\^atonnement process in unknown market environments~\cite{ColeF08,CheungCD13,BeiGH15}. Designing such combinatorial algorithms is useful also beyond the study of markets, since the underlying ideas can be applied in other areas. Variants of these algorithms were shown to solve scheduling~\cite{ImKM14,ImKM15} and cloud computing problems~\cite{DevanurGMVY15}, or can be used for fair allocation of indivisible items~\cite{ColeG15}.

In this paper, we design a new combinatorial polynomial time algorithm for computing equilibria in Fisher markets with budget-additive utilities. In a Fisher market, there is a single seller with a set $G = \{1,\ldots,m\}$ of goods. W.l.o.g.\ we assume that the total quantity of each good is 1. There is a set $B = \{1,\ldots,n\}$ of buyers. Each buyer $i$ has a budget $M_i > 0$ of money and a utility function $u_i$. For budget-additive utilities, $u_{ij} \ge 0$ is the utility of buyer $i$ if one unit of good $j$ is allocated to her. There is a \emph{happiness cap} $c_i > 0$, and the utility function is 
\[ u_i(\vecx_i) = \min \left\{ c_i, \sum_{j \in G} u_{ij} x_{ij} \right\} \enspace, \]
where $\vecx_i = (x_{ij})_{j \in G}$ is any bundle of goods assigned to buyer $i$. If $u_i(\vecx_i) = c_i$, then buyer $i$ is called \emph{capped buyer} for allocation $\vecx$. We assume all $u_{ij}$, $c_i$, $M_i$ are rational numbers.

Our goal is to compute an \emph{allocation} $\vecx = (\vecx_i)_{i\in B} $ of goods and \emph{prices} $\vecp = (p_j)_{j \in G}$ such that the pair $(\vecx,\vecp)$ is a market equilibrium. Given prices $\vecp$, a \emph{demand bundle} $\vecx_i^*$ of buyer $i$ is a bundle of goods that maximizes the utility
of buyer $i$ for its budget, i.e., $\vecx_i^* \in \arg\max \left\{ u_i(\vecx_i) \mid \sum_{j} p_j x_{ij} \le M_i \right\}$.
Note that $\sum_j u_{ij} x^*_{ij} > c_i$ is allowed.
A \emph{market equilibrium} $(\vecx,\vecp)$ is a pair such that 
\begin{itemize}
\item $\vecp \ge 0$ (prices are nonnegative),
\item $\sum_i x_{ij} \le 1$ for every $j \in G$ (no overallocation),
\item $\vecx_i$ is a demand bundle for every $i \in B$, and 
\item Walras' law holds: $p_j (1 - \sum_i x_{ij}) = 0$ for every $j \in G$. 
\end{itemize}

Note that if $\sum_i x_{ij} < 1$, then $p_j = 0$. An equilibrium $(\vecx,\vecp)$ is \emph{Pareto-optimal} if there is no equilibrium $(\vecx',\vecp')$ such that $u_i(\vecx) \le u_i(\vecx')$ for all $i$ and $u_i(\vecx) < u_i(\vecx')$ for at least one $i$.

Budget-additive utility functions are a simple class of submodular and concave functions and a natural generalization of the standard and well-understood case of linear utilities. These utility functions arise naturally in cases where agents have an intrinsic upper bound on their utility. For example, if the goods are food and the utility of a food item for a particular buyer is its calorie content, calories above a certain threshold do not increase the utility of the buyer. 
In addition, there are a variety of further applications in adword auctions and revenue maximization problems~\cite{AndelmanM04,AzarBKMN08,ChakrabartyG10,BuchbinderJN07}. Recently, market models where agents have budget-additive utilities attracted a significant amount of research interest, e.g., for the allocation of indivisible goods in offline~\cite{AndelmanM04,AzarBKMN08,ChakrabartyG10} and online~\cite{BuchbinderJN07,KapralovPV13} scenarios, for truthful mechanism design~\cite{BuchfuhrerDFKMPSSU10}, and for the study of Walrasian equilibrium with quasi-linear utilities~\cite{FeldmanGL13,DobzinskiFTW15,RoughgardenT15}. As simple variants of submodular functions, they capture many of the inherent difficulties of more general domains. Given this amount of interest, it is perhaps surprising that they are not well-understood within the classic Fisher and exchange markets. 
\smallskip

{\noindent \bf Results and Contribution. $\;$}
We study Fisher markets with budget-additive utilities. Our initial observations about these markets reveal that they have different properties than the ones with CES utilities usually studied in the literature. Due to the satiated nature of the utilities, capped buyers might not spend all their money or spend money on goods that do not give them maximum utility per unit of money, so prices and utilities in market equilibrium are not unique and can be quite different. It is possible to simply ignore the satiation and assume linear utilities. Then a variety of existing algorithms~\cite{DevanurPSV08,Orlin10,DuanM15,DuanGM16,BeiGH15} can be used to compute a market equilibrium. It continues to be a market equilibrium for the market with budget-additive utilities. However, this equilibrium may be undesirable, as in many cases it does not even satisfy Pareto-optimality of the allocation. 

\begin{exmpl}\label{ex:linear}
Consider a \emph{linear} market with two buyers and two goods, $u_{11} = 5$, $u_{21} = 2$, and $u_{12} = u_{22} = 1$. The budgets are $M_1 = 3$ and $M_2 = 1$. For the unique equilibrium we allocate good 1 completely to buyer 1 and good 2 completely to buyer 2, i.e., $x_{11} = x_{22} = 1$. The buyers' utilities amount to 5 and 1, resp., and the prices are $p_1 = 3$ and $p_2 = 1$. 
 
Now suppose buyer 1 has a budget-additive utility function with cap $c_1 = 1$. Then $(\vecx, \vecp)$ described above remains an equilibrium, since both buyers obtain a demand bundle (buyer 1 now has utility 1 instead of 5). Alternatively, suppose we allocate good 1 completely to buyer 2 and good 2 completely to buyer 1, i.e., $x_{12} = x_{21} = 1$. The utilities amount to 1 and 2, resp., and the prices can be chosen as $p_1 = 1$ and $p_2 \in [0.5,3]$. Here buyer 1 buys a bundle of goods with optimal utility of 1. Buyer 2 buys a demand bundle since he spends all its budget on a good that gives him the maximum {\it bang-per-buck ratio}. All goods are exactly allocated, and Walras' law holds. Thus, it represents another market equilibrium.
Note if $p_2 < 3$, buyer 1 does not spend all of its money, but it is still a demand bundle for because he achieves the maximum utility. Furthermore, such an equilibrium Pareto-dominates the one derived from the linear case in terms of utilities. \hfill $\blacksquare$
\end{exmpl}

We strive to compute a market equilibrium with a Pareto-optimal allocation and focus on a subset of market equilibria, in which we restrict the allocation to demand bundles which we call thrifty and modest -- buyers spend the least amount of money that can achieve their optimal utilities and receive a bundle of goods that has a minimality property. In Section~\ref{sec:prelim}, we show that such \emph{modest MBB equilibria} can be captured by a generalization of classic Eisenberg-Gale convex program, and with this additional property the utilities are unique and the allocation is always Pareto-optimal (w.r.t.\ all possible allocations, attainable in market equilibrium or not). We highlight that the set of modest MBB equilibria can be partially ordered with respect to their price vectors and forms a lattice. As such, there are modest MBB equilibria with pointwise largest and smallest prices, resp. Among all modest MBB equilibria they yield maximum and minimum revenue for the seller, resp.

Section~\ref{sec:algorithm} contains our main contribution -- a combinatorial algorithm that computes price and allocation vectors of a modest MBB equilibrium in time $O(mn^6(\log(m+n) + (m+n)\log U))$, where $n$ is the number of agents, $m$ the number of goods, and $U$ the largest integer in the market parameters. The computed equilibrium has a Pareto-optimal allocation, as well as pointwise largest prices and maximum revenue among all modest MBB equilibria.

Our algorithm represents a novel approach to compute market equilibria based on the idea of descending prices. While some parts of our algorithm are in spirit of combinatorial algorithms for linear markets~\cite{DevanurPSV08,DuanM15,DuanGM16,BeiGH15}, all these approaches are ascending-price algorithms. This technique and its usual analysis based the $1$-norm of excess money does not apply in our case, since the norm is non-monotonic and cannot be used to measure progress towards equilibrium. Surprisingly, our novel descending-price approach overcomes the $1$-norm issue, but we need to address additional challenges in establishing polynomial running time due to varying and non-increasing active budgets, and in showing that intermediate prices remain polynomially bounded. Note that, as a special case, this also yields a new combinatorial descending-price algorithm for linear Fisher markets.

In Section~\ref{sec:minPrice} we exploit the lattice structure of modest MBB equilibria and design a procedure, using which we can turn any modest MBB equilibrium into one with smallest prices and minimum revenue. In combination with the descending-price algorithm, it computes a modest MBB equilibrium with minimum revenue within the same asymptotic time bound.


Finally, we study two extensions in Section~\ref{sec:extend}. Facing multiple equilibria, a natural goal is to compute an allocation that maximizes
utilitarian social welfare. We prove that this problem is \classNP-hard, even when social welfare is measured by a $k$-norm of the vector of buyer
utilities, for any constant $k > 0$. Moreover, we consider a variant of the budget-additive utilities with a satiation point for \emph{each buyer and
each good}. They constitute a special class of separable piecewise-linear concave (SPLC) utilities, where each piecewise-linear component consists of
two segments with the second one being constant. We show that even in this very special case computing any market equilibrium becomes \classPPAD-hard.
\medskip

\noindent{\bf Related Work. }
The computation of market equilibria is a central area in algorithmic game theory. There are a variety of polynomial-time algorithms to compute approximate market equilibria based on solving different convex programming formulations~\cite{CodenottiRChapter07,GoelV11,Jain07}. Our paper is closer to work on markets with linear utilities and combinatorial algorithms that compute an exact equilibrium in polynomial time~\cite{DevanurPSV08,Orlin10,DuanM15,DuanGM16}. Directly related to our approach is the classic combinatorial algorithm for linear Fisher markets~\cite{DevanurPSV08}. In contrast, our algorithm is based on a new descending price approach where buyers are always saturated and goods have non-negative surplus. Further, the \emph{active} budgets of buyers vary with the price change, which creates new challenges in establishing a polynomial bound on the  number of iterations and the representation size of intermediate prices.

Independently of our work, Devanur et al~\cite{DevanurJMVY16} very recently presented the same convex program for Fisher markets with satiated buyers. They propose a polynomial-time algorithm for finding an arbitrary modest MBB equilibrium, but it is based on the ellipsoid method without any explicit running time bound.

Recently, algorithmic work has also started to address unknown markets, where utilities and budgets of buyers are unknown. Instead, algorithms iteratively set prices and query a demand oracle. In this domain, t\^atonnement dynamics have been studied for Fisher markets and extensions with concave utilities. For many classes of these markets, a notion of $(1+\epsilon)$-approximate market equilibrium can be reached after a convergence time polynomial in $1/\epsilon$ and other market parameters~\cite{ColeF08,CheungCD13,CheungCR12, BirnbaumDX11}. In some cases, the convergence time can even be reduced to $\log(1/\epsilon)$~\cite{CheungCD13}. A similar convergence rate is obtained by a more general algorithm even for general unknown exchange markets with weak gross-substitutes property, and even for linear markets with non-continuous demands and oracles using suitable tie-breaking~\cite{BeiGH15}. 

Allocation of indivisible items to agents with budget-additive utilities is an active area of research interest. There are constant-factor approximation algorithms for optimizing the allocation in offline~\cite{AndelmanM04,AzarBKMN08,ChakrabartyG10} and online~\cite{BuchbinderJN07,KapralovPV13} scenarios. Closer to our work is the study of markets with money. The existence of Walrasian equilibrium with quasi-linear utilities and algorithmic issues of bundling items were studied in~\cite{FeldmanGL13,DobzinskiFTW15,RoughgardenT15}. Strategic agents and truthful mechanisms for budget-additive markets have been analyzed in~\cite{BuchfuhrerDFKMPSSU10}. There are strong lower bounds for the approximation ratio of certain classes of truthful mechanisms, and a truthful mechanism with constant-factor approximation for budget-additive utilities is one of the most interesting open problems in combinatorial auctions.


\section{Preliminaries}
\label{sec:prelim}

For a given price vector $\vecp$ and buyer $i$, we denote the \emph{maximum bang-per-buck (MBB)} ratio by $\alpha_i = \max_j u_{ij}/p_j$, where we make the assumption that $0/0 = 0$. Budget-additive utilities strictly generalize linear utilities: when all $c_i$'s are large enough, they are equivalent to linear utilities. If buyer $i$ is uncapped in a market equilibrium $(\vecx,\vecp)$, it behaves as in the linear case, spends all its budget, and buys only MBB goods ($x_{ij} > 0$ only if $u_{ij}/p_j = \alpha_i$). Otherwise, 
if buyer $i$ is capped in $(\vecx,\vecp)$, it might buy non-MBB goods and not spend all of its budget. This implies that unlike the case of linear utilities, market equilibrium prices and utilities are not unique with budget-additive utilities.

It is easy to see that we can obtain one market equilibrium by simply ignoring the happiness caps and treating the market as a linear one. However, this equilibrium is often undesirable since it is not always Pareto-optimal. 

Our main goal in this paper is to find a market equilibrium that is Pareto-optimal. More generally, we will also be concerned with finding a
(Pareto-optimal) market equilibrium that can maximize social welfare $\sum_{i \in B} u_{i} (\vecx_{i})$. For the former we provide a polynomial-time
algorithm, the latter we prove it to be \classNP-hard.

\medskip
\smallskip

{\noindent \bf Modest MBB Equilibria, Pareto-Optimality, and Uniqueness. $\;$} 
The main challenges in budget-additive markets arise from capped buyers, who may possibly have multiple choices for the demand bundle. Let us introduce two convenient restrictions on the allocation to capped buyers. 

\begin{itemize}
\item An allocation $\vecx_i$ for buyer $i$ is called \emph{modest} if $\sum_j u_{ij}x_{ij} \le c_i$. By definition, for uncapped buyers every demand bundle is modest. For capped buyers, a modest bundle of goods $\vecx_i$ is such that utility breaks even between the linear part and $c_i$, i.e., $c_i = \sum_j u_{ij}x_{ij}$. 

\item A demand bundle $\vecx_i$ is called \emph{thrifty} or \emph{MBB} if it consists of only MBB goods: $x_{ij} > 0$ only if $u_{ij}/p_j = \alpha_i$. As noted above, for uncapped buyers every demand bundle is MBB. 
\end{itemize}

We call a market equilibrium $(\vecx,\vecp)$ a \emph{modest MBB equilibrium} if $\vecx_i$ is modest and MBB for every buyer $i \in B$. We show an algorithm to compute in polynomial time such an equilibrium where $\vecx$ is also Pareto-optimal. Such an equilibrium is also desirable because it agrees the behavioral assumption that each buyer is thrifty and spends the least amount of money in order to obtain a utility maximizing bundle of goods.
 
Consider the following Eisenberg-Gale program~\eqref{pro:EG}, which allows us to find a modest and Pareto-optimal allocation.
\begin{equation}\label{pro:EG}
	\begin{array}[4]{rrcll}
	\text{Max.} & \multicolumn{4}{l}{\D \sum_{i \in B} M_i \log \D \sum_{j \in G} u_{ij} x_{ij} }\vspace{0.2cm}\\
	\text{s.t.}	& \D \sum_{j \in G} u_{ij} x_{ij}& \leq & c_i & i \in B \vspace{0.2cm}\\
	          	& \D \sum_{i \in B} x_{ij} & \leq & 1 & j \in G\\
						  & x_{ij} & \ge & 0 & i \in B,\ j \in G\\
	\end{array}
	\end{equation}
	By standard arguments, we consider the dual for~\eqref{pro:EG} using dual variables $\gamma_i$ and $p_j$ for the first two constraints, resp., and the KKT conditions read: 
	\medskip

\begin{minipage}[b]{0.49\linewidth}
	\begin{enumerate}
	\item \label{BPP1} $p_j/u_{ij} \ge M_i/u_i - \gamma_i$
	\item \label{BPP2} $x_{ij} > 0 \; \Rightarrow \; p_j/u_{ij} = M_i/u_i - \gamma_i$
	\end{enumerate}
\end{minipage}	
\begin{minipage}[b]{0.49\linewidth}
	\begin{enumerate}
	\setcounter{enumi}{2}	
	\item $p_j \ge 0$ and $p_j > 0 \; \Rightarrow \; \sum_{i \in B} x_{ij} = 1$
	\item $\gamma_i \ge 0$ and $\gamma_i > 0 \; \Rightarrow \; u_i = c_i$
	\end{enumerate}
\end{minipage}	
\vspace{0.01cm}

Observe that the Lagrange multiplier $\gamma_i$ indicates if the cap $c_i$ represents a tight constraint in the optimum solution. The dual variables $p_j$ can be interpreted as prices. Note that conditions~\ref{BPP1} and~\ref{BPP2} imply that $x_{ij} > 0$ if and only if 
  $j \in \arg \min_{j'} p_{j'}/u_{ij'} = \arg \max_{j'} u_{ij'}/p_{j'}$,
i.e., all agents purchase goods with maximum bang-per-buck. Hence, similarly as for linear markets~\cite{VaziraniChapter07}, the KKT conditions imply that an optimal solution to the EG program~\eqref{pro:EG} and corresponding dual prices constitute a market equilibrium, in which every agent buys goods that have maximum bang-per-buck. The KKT conditions postulate this also for agents whose utility reaches the cap. Thus, the optimal solution to this program is a modest MBB equilibrium. Furthermore, we obtain the following favorable analytical properties.
\begin{proposition} \label{prop:EGisMBB}
The optimal solutions to~\eqref{pro:EG} are exactly the modest MBB equilibria. The utility vector is unique across all such equilibria and each such equilibrium is Pareto-optimal. In particular, there is a unique set of capped buyers. Non-capped buyers spend all their money. Capped buyers do not overspend.
\end{proposition}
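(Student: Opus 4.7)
The plan is to prove the equivalence in both directions by matching the KKT conditions stated above, then to read off uniqueness from strict concavity and Pareto-optimality by a short contradiction against the EG optimum. The chief obstacle is the reverse direction: starting from a modest MBB equilibrium $(\vecx,\vecp)$, one must exhibit Lagrange multipliers $\gamma_i$ making all four KKT conditions hold simultaneously.

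For the forward direction I take an optimum $\vecx$ of~\eqref{pro:EG} with dual variables $\gamma_i$, $p_j$ and check the market-equilibrium definition directly. Modesty is the first primal constraint, and conditions~\ref{BPP1} and~\ref{BPP2} say exactly that $x_{ij}>0$ only for MBB goods. Summing~\ref{BPP2} over the support gives that buyer $i$'s expenditure equals $M_i - \gamma_i u_i$, which is $M_i$ when $\gamma_i=0$ and $M_i - \gamma_i c_i \le M_i$ when $\gamma_i > 0$ (using the complementary slackness $\gamma_i > 0 \Rightarrow u_i = c_i$). Combined with the MBB property this certifies that $\vecx_i$ is a demand bundle. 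No overallocation and Walras' law follow from the second primal constraint and the complementary slackness on $p_j$.

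For the reverse direction I let $\alpha_i = \max_j u_{ij}/p_j$ be the MBB ratio and define $\gamma_i := M_i/u_i - 1/\alpha_i$, where $u_i := \sum_j u_{ij}x_{ij}$. Because $\vecx_i$ is an MBB bundle, $p_j/u_{ij} = 1/\alpha_i$ on its support and $p_j/u_{ij} \ge 1/\alpha_i$ off it, which yields~\ref{BPP1} and~\ref{BPP2}. Non-negativity of $\gamma_i$ follows by case analysis: uncapped buyers must exhaust their budget on MBB goods (otherwise they could strictly increase utility), so $u_i = \alpha_i M_i$ and $\gamma_i = 0$; capped buyers satisfy $\alpha_i M_i \ge c_i$ (else no demand bundle reaches the cap), so $\gamma_i \ge 0$. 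The remaining KKT conditions are Walras' law and the implication $\gamma_i>0 \Rightarrow u_i=c_i$, both of which hold by construction. Since~\eqref{pro:EG} is a concave program, these conditions certify optimality.

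With the equivalence in hand, strict concavity of $\sum_i M_i \log u_i$ in the utility vector $(u_i)_{i\in B}$, together with convexity of the set of attainable utility vectors, forces uniqueness of $(u_i)$ and hence a unique set of capped buyers. The two spending statements are immediate from the case analysis above. For Pareto-optimality I assume an allocation $\vecx'$ satisfies $u_i(\vecx') \ge u_i(\vecx)$ for all $i$ with strict improvement at some $i^*$, and shrink each row to an allocation $\vecx''$ with $\sum_j u_{ij} x''_{ij} = u_i(\vecx') \le c_i$. Then $\vecx''$ is feasible for~\eqref{pro:EG} with strictly larger objective (using $u_{i^*}(\vecx) < c_{i^*}$, which must hold since otherwise no improvement at $i^*$ is possible), contradicting optimality of $\vecx$.
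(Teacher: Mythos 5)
Your proof is correct and takes essentially the same route as the paper: both directions are established via the KKT system for~\eqref{pro:EG}, using the same multiplier $\gamma_i = M_i/u_i - 1/\alpha_i$ in the reverse direction, with uniqueness of the utility vector from strict concavity of the logarithmic objective and Pareto-optimality by shrinking a dominating allocation to a feasible point of the program. The only stylistic differences are that you replace the paper's one-line bound $M_i \ge \sum_j p_j x_{ij} = u_i(\vecx)/\alpha_i$ with a capped/uncapped case split to get $\gamma_i \ge 0$, and you spell out the shrinking argument for Pareto-optimality that the paper leaves implicit.
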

\begin{proof} 
We observe first that there is an interior feasible solution to \eqref{pro:EG}. Simply set $x_{ij} = \epsilon > 0$ for all $i$ and $j$, where $\epsilon$ is small enough such all constraints in~\eqref{pro:EG} are satisfied with inequality. The existence of an interior feasible solution guarantees that the KKT conditions are necessary and sufficient for an optimal solution to~\eqref{pro:EG}. 

Let $\vecx$ and $\vecx'$ be two optimal solutions to (\ref{pro:EG}) and assume that $u_h(\vecx) \not= u_h(\vecx')$ for some buyer $h$. Consider the allocation $\vecx'' = (\vecx + \vecx')/2$. It is clearly feasible. Also, 
\[ \sum_{i \in B} M_i \log u_i (\vecx'') < \left( \sum_{i \in B} M_i \log u_i(\vecx) + \sum_{i \in B} M_i \log u_i(\vecx') \right)/2\enspace, \]
a contradiction to the optimality of the allocation. The inequality follows from the concavity of the $\log$-function. We have now shown that the utilities of the buyers are unique among all optimal solutions of \eqref{pro:EG}. Thus, every optimal solution to \eqref{pro:EG} is modest, MBB and Pareto-optimal.

Conversely, let $(\vecx,\vecp)$ be a modest MBB equilibrium. We show that $\vecx$ is an optimal solution to (\ref{pro:EG}). $\vecx$ is feasible since it is modest and does not overallocate any good. Since $\vecx_i$ is a thrifty demand bundle for buyer $i$, we have $u_{ij}/p_j = \alpha_i = \max_\ell u_{i\ell}/p_\ell$ whenever $x_{ij} > 0$. Thus
\[  M_i \ge \sum_{j} p_j x_{ij} = \sum_j \frac{u_{ij}}{\alpha_i} x_{ij} = \frac{u_i(\vecx)}{\alpha_i}\enspace, \]
and hence $M_i/u_i(\vecx) \ge 1/\alpha_i$. Let $\gamma_i = M_i/u_i - 1/\alpha_i$. Then $\gamma_i \ge 0$. We show that the KKT conditions hold. For any $j$, we have $p_j/u_{ij} \ge 1/\alpha_i = M_i/u_i - \gamma_i$. If $x_{ij} > 0$, then $p_j/u_{ij} = 1/\alpha_i = M_i/u_i - \gamma_i$. Prices are non-negative by definition and $p_j > 0$ implies that good $j$ is completely allocated by Walras's law. Finally, assume $\gamma_i > 0$. Then $M_i/u_i(\vecx) > 1/\alpha_i$ and hence 
\[ M_i > \frac{u_i(\vecx)}{\alpha_i} = \frac{\sum_j u_{ij} x_{ij}}{\alpha_i} = \sum_j p_j x_{ij}\enspace, \]
where the first equality follows from the fact that the allocation is modest. Let $\delta = M_i/\sum_j p_j x_{ij}$. Then buyer $i$ could afford the bundle $\delta \vecx_i$. Since $\vecx_i$ is a demand bundle for buyer $i$, we must have $c_i \le \sum_j u_{ij} x_{ij}$. Since the allocation is modest, we have equality. 
\end{proof}
While utilities are unique, allocation and prices of modest MBB equilibria might not be unique. Consider a market with two identical buyers and two goods, where $u_{11} = u_{12} = u_{21} = u_{22} = 1$, $c_1 = c_2 = 1$, and $M_1 = M_2 = 5$. The unique equilibrium utility of both buyers is $u_1 = u_2 = 1$, which can be obtained for any $p_1 = p_2 = p$, where $p \in [0,5]$ and allocation $\vecx$ satisfying $x_{11} + x_{12} = 1;\ x_{21} + x_{22} = 1;\ x_{11} + x_{21} = 1;\ x_{12} + x_{22} = 1$.


\setcounter{exmpl}{0}
\begin{exmpl}[continued] 
For our example above, the modest MBB equilibrium obtained from solving the convex program is $x_{11} = 1/5$, $x_{12} = 0$, $x_{21} = 4/5$ and $x_{22} = 1$ with prices $p_1 = 10/13$ and $p_2 = 5/13$. Buyer 1 spends $2/13$, buyer 2 spends the entire budget. The utilities are $1$ and $13/5$. It is easy to see that the KKT conditions hold. This equilibrium is Pareto-optimal and also the best equilibrium in terms of social welfare. \hfill $\blacksquare$
\end{exmpl}


\newcommand{\calP}{\ensuremath{\mathcal{P}}}


\noindent \textbf{Structure of Modest MBB Equilibria. $\;$}
Let us characterize the set of price vectors of modest MBB equilibria, which we denote by $\mathcal{P} = \{ \vecp \mid (\vecx,\vecp) \text{ is modest MBB equilibrium }\}$. We consider the coordinate-wise comparison, i.e., $\vecp \ge \vecp'$ iff $p_j \ge p'_j$ for all $j \in G$. 
\begin{theorem}
	\label{thm:lattice}
  The pair $(\calP, \ge)$ is a lattice.
\end{theorem}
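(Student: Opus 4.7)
The plan is to prove closure of $\calP$ under coordinate-wise meet and join by exhibiting, for any two modest MBB equilibria $(\vecx,\vecp)$ and $(\vecx',\vecp')$, explicit allocations witnessing $\vecp^+ = \vecp \vee \vecp'$ and $\vecp^- = \vecp \wedge \vecp'$ as equilibrium prices. Write $S_i$ and $S_i'$ for the MBB sets of buyer $i$ at $\vecp$ and $\vecp'$, and let $u_i^*$ denote the common equilibrium utility of buyer $i$, which is unique by Proposition~\ref{prop:EGisMBB}.

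The pivotal step is a structural lemma: for every buyer $i$, both $\vecx_i$ and $\vecx_i'$ are supported on $S_i \cap S_i'$, and moreover $\vecx$ fully allocates every good $j$ with $p'_j > 0$ (symmetrically for $\vecx'$). I would establish this by double-counting $\sum_{i,j} p'_j x_{ij}$. From $\sum_i x_{ij} \le 1$, this sum is at most $\sum_j p'_j$. Conversely, $\vecx_i$ attains utility $u_i^*$, so its cost at prices $\vecp'$ is at least the minimum cost of reaching utility $u_i^*$ at $\vecp'$, namely $M_i$ for non-capped buyers and $c_i/\alpha_i(\vecp')$ for capped buyers. Summing these per-buyer lower bounds and invoking market clearing at $\vecp'$ yields $\sum_{i,j} p'_j x_{ij} \ge \sum_j p'_j$. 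Equality throughout forces each $\vecx_i$ to be a minimum-cost bundle at $\vecp'$, hence supported on $S_i'$ and so on $S_i \cap S_i'$, and forces $\sum_i x_{ij} = 1$ for every $j$ with $p'_j > 0$.

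Given the lemma, I would verify that $(\vecx, \vecp^+)$ is itself a modest MBB equilibrium. For $j \in S_i \cap S_i'$ the bang-per-buck $u_{ij}/p^+_j = u_{ij}/\max(p_j,p'_j)$ equals the common value $\min(\alpha_i(\vecp),\alpha_i(\vecp'))$, whereas any good outside $S_i \cap S_i'$ has strictly smaller bang-per-buck at $\vecp^+$; hence $\vecx$ is thrifty at $\vecp^+$ with MBB rate $\hat{\alpha}_i = \min(\alpha_i(\vecp),\alpha_i(\vecp'))$. Modesty is inherited from $(\vecx,\vecp)$, and since $\alpha_i(\vecp), \alpha_i(\vecp') \ge c_i/M_i$ for every capped $i$, such buyers spend $c_i/\hat{\alpha}_i \le M_i$ at $\vecp^+$, while non-capped buyers spend exactly $M_i$. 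Market clearing follows because every good with $p^+_j > 0$ has $p_j > 0$ or $p'_j > 0$, and $\vecx$ fully allocates such goods by the lemma. The verification that $(\vecx, \vecp^-)$ is a modest MBB equilibrium is symmetric, with MBB rate $\max(\alpha_i(\vecp),\alpha_i(\vecp'))$.

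The main obstacle is the structural lemma: extracting from market clearing, thrift, and utility uniqueness alone the strong conclusion that both allocations live on the common MBB support requires matching the two chains of inequalities exactly at $\sum_j p'_j$. Once this is in place, the remaining KKT checks at $\vecp^+$ and $\vecp^-$ are direct computations.
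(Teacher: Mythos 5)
Your proof is correct but follows a genuinely different route from the paper. The paper partitions the goods into $S_0 = \{j : p_j = p'_j\}$, $S_1 = \{j : p_j < p'_j\}$, $S_2 = \{j : p_j > p'_j\}$, proves a lemma (their Lemma~\ref{lem:lattice}) that the buyer sets $\Gamma(S_r,\vecp)=\Gamma(S_r,\vecp')$ are identical in both equilibria, mutually disjoint, and consist only of capped buyers for $r\in\{1,2\}$ (via a money-counting argument localized to $S_1$ and $S_2$), and then verifies equilibrium at $\overline\vecp$ for a \emph{patched} allocation that takes $x'_{ij}$ on $S_1$ and $x_{ij}$ on $S_0 \cup S_2$. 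You instead prove a single global structural lemma by double-counting $\sum_{i,j} p'_j x_{ij}$: feasibility gives the upper bound $\sum_j p'_j$, while thrift and utility-uniqueness give the matching lower bound (spending $M_i$ or $c_i/\alpha_i(\vecp')$ per buyer, which market clearing at $\vecp'$ sums to $\sum_j p'_j$). Forcing equality yields the stronger conclusions that both $\vecx_i$ and $\vecx_i'$ are supported on the \emph{common} MBB set $S_i \cap S_i'$ and that $\vecx$ fully allocates every good with $p'_j > 0$ (and symmetrically). This lets you use the \emph{single} allocation $\vecx$ as the witness at both $\vecp^+$ and $\vecp^-$, with the clean identities $\hat\alpha_i = \min(\alpha_i(\vecp),\alpha_i(\vecp'))$ at the join and $\check\alpha_i = \max(\alpha_i(\vecp),\alpha_i(\vecp'))$ at the meet, avoiding the case analysis over $S_0,S_1,S_2$ entirely. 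Your structural lemma in fact implies the paper's Lemma~\ref{lem:lattice}: since non-capped buyers have $\alpha_i(\vecp)=\alpha_i(\vecp')$, and for capped $i$ the sign of $\alpha_i(\vecp)-\alpha_i(\vecp')$ determines whether $S_i\cap S_i'$ lies entirely in $S_1$, $S_2$, or $S_0$, the partition of buyers and the disjointness of the $\Gamma$ sets drop out as corollaries. Both routes rely on the same core mechanism (pinning down equalities via money conservation), but your global formulation is arguably tighter; the paper's local version has the advantage of being directly reusable in Section~\ref{sec:minPrice}. One small point worth stating explicitly in a polished write-up: the degenerate case $\alpha_i(\vecp') = \infty$ (some $p'_j=0$ with $u_{ij}>0$) makes your per-buyer lower bound trivial, but the support claim still follows since $x_{ij}>0$ forces $u_{ij}>0$ (as $(\vecx,\vecp)$ is MBB and $\alpha_i(\vecp)>0$) and equality forces $p'_j x_{ij}=0$, hence $p'_j=0$ and $j\in S_i'$.
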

Given $\vecp$ and $\vecp'$, we partition the set of goods into three sets: $S_0 = \{j \mid p_j = p'_j\}$, $S_1 = \{j \mid p_j < p'_j\}$ and $S_2 = \{j \mid p_j > p'_j\}$. Let $\Gamma(T, \vecp)=\{i \mid x_{ij} > 0 \textrm{ for some }j \in T\}$ denote the set of buyers who are allocated a nonzero amount of any good in set $T$ in the equilibrium $(\vecx, \vecp)$. The proof exploits the following properties about the sets $S_0, S_1$ and $S_2$.
\begin{lemma}
	\label{lem:lattice}
  Given any two modest MBB equilibria $(\vecx, \vecp)$ and $(\vecx', \vecp')$, we have
  \begin{enumerate}
  \item[$(i)$] $\Gamma(S_i, \vecp) = \Gamma(S_i, \vecp')$ for $i=0,1,2$, i.e., the set of buyers who buy the goods of $S_i$ with respect to prices $\vecp$ and $\vecp'$are same. 
  \item[$(ii)$] $\Gamma(S_0, \vecp), \Gamma(S_1, \vecp)$ and $\Gamma(S_2, \vecp)$ are mutually disjoint.
  \item[$(iii)$] All buyers in $\Gamma(S_1, \vecp)$ and $\Gamma(S_2, \vecp)$ are capped buyers in both $(\vecx, \vecp)$ and $(\vecx', \vecp')$.
  \end{enumerate}
\end{lemma}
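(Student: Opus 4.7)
The plan is to use a double-counting argument on the cross quantity $\Sigma := \sum_{i,j} p'_j x_{ij}$, i.e., the total cost of the allocation $\vecx$ evaluated at the \emph{other} equilibrium's prices $\vecp'$. I will sandwich $\Sigma$ from above by Walras' law and from below by a per-buyer MBB estimate, and then read off the lemma from the resulting equality conditions. The upper bound is $\Sigma \le \sum_j p'_j$ since $\sum_i x_{ij}\le 1$ and $p'_j\ge 0$. For the matching reference value, Walras' law applied to $(\vecx',\vecp')$ gives $\sum_{i,j} p'_j x'_{ij} = \sum_j p'_j$ (every good with $p'_j>0$ is fully sold; those with $p'_j=0$ contribute zero).

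Writing $T_i = \{j : u_{ij}/p_j = \alpha_i\}$ and $T'_i = \{j : u_{ij}/p'_j = \alpha'_i\}$ for the MBB sets, the per-buyer lower bound I aim to prove is $\sum_j p'_j x_{ij} \ge \sum_j p'_j x'_{ij}$ for every $i$. By Proposition~\ref{prop:EGisMBB}, utilities and the capped set are identical for both equilibria. For a \emph{non-capped} $i$, this forces $\alpha_i=\alpha'_i$, because $u_i=\alpha_i M_i$ is unique. Each $j$ with $x_{ij}>0$ lies in $T_i$, so $p_j=u_{ij}/\alpha_i$, while MBB-maximality at $\vecp'$ gives $p'_j\ge u_{ij}/\alpha'_i = p_j$; summing yields $\sum_j p'_j x_{ij} \ge M_i = \sum_j p'_j x'_{ij}$, with equality iff the support of $\vecx_i$ lies in $S_0$. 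For a \emph{capped} $i$, modesty and MBB let us evaluate the money spent at $\vecp$ and $\vecp'$ as $c_i/\alpha_i$ and $c_i/\alpha'_i$ respectively; the MBB inequality at $\vecp'$ now reads $p'_j \ge p_j(\alpha_i/\alpha'_i)$ on the support of $\vecx_i$, giving $\sum_j p'_j x_{ij} \ge c_i/\alpha'_i = \sum_j p'_j x'_{ij}$, with equality iff the support of $\vecx_i$ lies in $T'_i$.

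Summing these per-buyer bounds and sandwiching by the upper bound forces equality throughout, so every per-buyer equality condition holds. For a non-capped $i$ this places the support of $\vecx_i$ entirely in $S_0$, hence non-capped buyers never appear in $\Gamma(S_1,\vecp)\cup\Gamma(S_2,\vecp)$; swapping the two equilibria yields the analogous fact at $\vecp'$, and combined with the uniqueness of the capped set this proves (iii). For a capped $i$ the equality condition gives $\mathrm{supp}(\vecx_i)\subseteq T_i\cap T'_i$, and by symmetry also $\mathrm{supp}(\vecx'_i)\subseteq T_i\cap T'_i$. On this common MBB set $p_j/p'_j=\alpha'_i/\alpha_i$ is \emph{constant}, so the three cases $\alpha_i>\alpha'_i$, $\alpha_i<\alpha'_i$, and $\alpha_i=\alpha'_i$ force the union of the two supports entirely into $S_1$, $S_2$, or $S_0$ respectively. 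Membership of $i$ in $\Gamma(S_k,\vecp)$ is therefore determined by the intrinsic scalar $\alpha_i/\alpha'_i$ and is shared by $\Gamma(S_k,\vecp')$, yielding (i); and each buyer lies in exactly one of the three $\Gamma$-sets, which is (ii).

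The crux is identifying the right double-counting quantity: mixing $\vecp'$ with $\vecx$ so that Walras' law supplies a tight upper bound while modesty and MBB together supply a matching per-buyer lower bound. Once the sandwich collapses, the equality conditions are rigid enough to pin down the common MBB set for each buyer, which in turn forces the constant-ratio structure on $p_j/p'_j$ and the clean $S_0$/$S_1$/$S_2$ partition of buyers. A minor subtlety is the case $p_j=0$: with the convention $0/0=0$, a good with $p_j=p'_j=0$ contributes zero to every inequality, and a buyer with an infinite MBB ratio must be capped and spends nothing, so neither case disturbs the argument.
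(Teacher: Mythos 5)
Your proof is correct, and it takes a genuinely different route from the paper's. The paper works locally: it fixes $S_1$, shows $\Gamma(S_1,\vecp')\subseteq\Gamma(S_1,\vecp)$ by an MBB-ratio argument, and then chains three inequalities -- total active budget of $\Gamma(S_1,\vecp')$ at $\vecp$ is at most $\sum_{j\in S_1}p_j$; that budget increases by at most $\sum_{j\in S_1}(p'_j - p_j)$; and at $\vecp'$ the budget must cover $\sum_{j\in S_1}p'_j$ -- whose sum is a tautology, forcing all three to be equalities, from which $(i)$--$(iii)$ are read off, with $S_2$ handled by swapping the equilibria. You instead run a single global cross-duality sandwich on $\Sigma=\sum_{i,j}p'_jx_{ij}$: above by $\sum_j p'_j$ (supply plus Walras), below by $\sum_j p'_j x'_{ij}$ via a per-buyer bound using Proposition~\ref{prop:EGisMBB} (uncapped buyers have $\alpha_i=\alpha'_i$ since $u_i=\alpha_i M_i$ is unique; capped buyers' spending at $\vecp'$ is pinned by modesty to $c_i/\alpha'_i$ and bounded by $p'_j\ge u_{ij}/\alpha'_i$). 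Collapsing the sandwich forces $\mathrm{supp}(\vecx_i)\subseteq T_i\cap T'_i$ for each $i$, on which $p_j/p'_j=\alpha'_i/\alpha_i$ is a constant, so each buyer's support (in \emph{both} allocations) lands wholesale in a single $S_k$. That per-buyer rigidity is a stronger and cleaner structural statement than the paper extracts directly, and it makes $(i)$--$(iii)$ fall out immediately; the cost is a heavier reliance on Proposition~\ref{prop:EGisMBB} (both the unique utilities and the unique capped set), whereas the paper only uses the latter. One minor point worth tightening in your write-up: for the capped case, derive the lower bound directly as $\sum_j p'_j x_{ij}\ge\sum_j(u_{ij}/\alpha'_i)x_{ij}=c_i/\alpha'_i$ rather than via the intermediate $p'_j\ge p_j(\alpha_i/\alpha'_i)$, which is ill-defined when $\alpha_i=\infty$ (i.e., when some good in $i$'s MBB set has $p_j=0$); the direct form sidesteps the $0/0$ and $\infty$ bookkeeping you flag at the end.
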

\begin{proof}
  We first focus on $S_1$, the set of goods whose prices strictly increase from $\vecp$ to $\vecp'$. For any $i \in \Gamma(S_1,\vecp')$, there is some $j \in S_1$ such that $u_{ij}/p'_j \ge u_{i\ell}/p'_\ell$ for all $\ell \not\in S_1$. Since $u_{ij}/p_j > u_{ij}/p'_j$ and $u_{i\ell}/p'_\ell \ge u_{ij}/p_\ell$ we conclude
  \begin{itemize}
    \item [(a)] $\Gamma(S_1, \vecp') \subseteq \Gamma(S_1, \vecp)$, and 
    \item [(b)] $x_{ij} = 0$ for $i \in \Gamma(S_1, \vecp')$ and $j \notin S_1$. 
  \end{itemize} 
	Next we analyze the total money spent on goods in set $S_1$ from buyers in $\Gamma(S_1, \vecp')$, with respect to equilibria $E=(\vecx, \vecp)$ and $E' = (\vecx', \vecp')$. Due to fact~(b), buyers in $\Gamma(S_1, \vecp')$ will only buy goods in the set $S_1$ in $E$. We have
\begin{equation}
  \label{eq:lat1}
	  \sum_{i \in \Gamma(S_1, \vecp')}M^a_i = \sum_{i \in \Gamma(S_1, \vecp')}\sum_{j \in S_1}x_{ij}p_j \leq \sum_{j \in S_1}p_j\enspace,
\end{equation}
where $M^a_i$ is the money spent by buyer $i \in \Gamma(S_1, \vecp')$ in $E$. If $i$ is uncapped in $E$, he remains uncapped in $E'$ and his spending remains the same. For the other case, if $i$ is capped in $E$, he spends $\sum_{j\in S_1}{x_{ij}p_j}$, and his spending in $E'$ will be no more than $\sum_{j\in S_1}{x_{ij}p'_j}$. Hence, the total increase of spending of buyers in $\Gamma(S_1, \vecp')$ from $\vecp$ to $\vecp'$ will be no more than 
\begin{equation}
  \label{eq:lat2}
  \sum_{i \in \Gamma(S_1, \vecp')}{M^a_i}' - \sum_{i \in \Gamma(S_1, \vecp')}M^a_i = \sum_{i \in \Gamma(S_1, \vecp')}\sum_{j\in S_1}{(x_{ij}p'_j - x_{ij}p_j)} \leq \sum_{j \in S_1}p'_j - \sum_{j \in S_1}p_j\enspace,
\end{equation}
where ${M^a_i}'$ is the money spent by buyer $i \in \Gamma(S_1, \vecp')$ in $E'$. Further, due to the definition of $\Gamma(S_1, \vecp')$ and the fact that $(\vecx', \vecp')$ is a market equilibrium, we have
\begin{equation}
  \label{eq:lat3}
  \sum_{j \in S_1}p_j' \leq \sum_{i \in \Gamma(S_1, \vecp')}{M^a_i}' \enspace.
\end{equation}
Summing up equations~\eqref{eq:lat1}, \eqref{eq:lat2} and~\eqref{eq:lat3} implies that they can hold at the same time if and only if the three inequalities in them are all equalities. In~\eqref{eq:lat1} this implies there is no buyer outside $\Gamma(S_1, \vecp')$ that has nonzero allocation of any good in $S_1$ in equilibrium $E$. Hence we have $\Gamma(S_1, \vecp) = \Gamma(S_1, \vecp')$. In equation~(\ref{eq:lat2}) this implies all buyers in $\Gamma(S_1, \vecp')$ are capped buyers with both $\vecp$ and $\vecp'$. In~\eqref{eq:lat3} this implies there is no buyer in $\Gamma(S_1, \vecp')$ that has nonzero allocation of any good outside $S_1$ in equilibrium $(\vecx', \vecp')$. Hence $\Gamma(S_1, \vecp')$ does not have any overlap with $\Gamma(S_0, \vecp') \cup \Gamma(S_2, \vecp')$.

Reversing the role of $(\vecx, \vecp)$ and $(\vecx', \vecp')$ and using the same argument, we can prove the same claims for set $S_2$. That is, $\Gamma(S_2, \vecp) = \Gamma(S_2, \vecp')$, $\Gamma(S_2, \vecp')$ has no overlap with $\Gamma(S_0, \vecp') \cup \Gamma(S_1, \vecp')$, and all buyers in $\Gamma(S_2, \vecp')$ are capped buyers in both equilibria. Further, the claims for both $S_1$ and $S_2$ implies $\Gamma(S_0, \vecp) = \Gamma(S_0, \vecp')$ and $\Gamma(S_0, \vecp)$ has no overlap with $\Gamma(S_1, \vecp) \cup \Gamma(S_2, \vecp)$.
Together they prove the lemma.
\end{proof}
\begin{proof}[Proof of Theorem~\ref{thm:lattice}]
  It suffices to show that for any two modest MBB equilibria $(\vecx, \vecp)$ and $(\vecx', \vecp')$ the supremum $\overline{\vecp} = \vecp \lor \vecp'$ and infimum $\underline\vecp = \vecp \land \vecp'$ are both in $\calP$.
  
We prove the claim for $\overline\vecp$. The claim for $\underline\vecp$ can be proved very similarly, and we omit the details. Consider the supremum $\overline{\vecp}$ and a suitable allocation $\overline{\vecx}$ given by
\[
\begin{array}{ccc}
\overline{p}_j = 
  \begin{cases}  
    p'_j & \text{if $j \in S_1$}\\
    p_j  & \text{if $j \in S_2$}\\
    p_j \text{(or $p'_j$)} & \text{if $j \in S_0$}\enspace
  \end{cases} 
 & \hspace{1.5cm}\null
 &
\overline{x}_{ij} = 
  \begin{cases}  
    x'_{ij}   & \text{if $j \in S_1$}\\
    x_{ij}  & \text{if $j \in S_2$}\\
    x_{ij} & \text{if $j \in S_0$}\enspace.
  \end{cases}  
\end{array}
\]
We will show that $(\overline{\vecx}, \overline\vecp)$ is a modest MBB equilibrium. Compare $\vecp$ to $\overline\vecp$, we only increase the prices of goods in $S_1$ from $\vecp$ to $\vecp'$. Hence, the equality edges connecting to $S_2$ and $S_0$ remain the same when prices change from $\vecp$ to $\overline\vecp$. Therefore, with regard to price vector $\overline\vecp$, $x_{ij}$ for goods $j \in S_2 \cup S_0$ will remain a feasible MBB allocation that clears all surpluses of goods in $S_2 \cup S_0$. Using a similar argument, one can show that $x'_{ij}$ for goods $j \in S_1$ will remain MBB and clear all surpluses of goods in $S_1$. We conclude that $(\overline\vecx, \overline\vecp)$ is indeed a modest MBB equilibrium.
\end{proof}
\begin{corollary}
There exists a modest MBB equilibrium with coordinate-wise highest (resp.\ lowest) prices. It yields the maximum (resp.\ minimum) revenue for the seller among all modest MBB equilibria.
\end{corollary}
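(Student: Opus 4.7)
The plan is to deduce the corollary from Theorem~\ref{thm:lattice} together with compactness of $\calP$ and a simple monotonicity observation about revenue. First I would observe that in any modest MBB equilibrium $(\vecx,\vecp)$, Walras' law forces $\sum_i x_{ij} = 1$ whenever $p_j > 0$, while goods with $p_j = 0$ contribute nothing, so the seller's revenue equals the linear functional $L(\vecp) = \sum_{j \in G} p_j$, which is strictly coordinate-wise monotone in $\vecp$. Consequently, once a coordinate-wise largest price vector in $\calP$ is exhibited, the revenue claim follows for free.

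Next I would argue that $L$ attains its supremum on $\calP$. Boundedness of $\calP$ is immediate from $0 \le p_j \le L(\vecp) \le \sum_i M_i$. For closedness I would invoke Proposition~\ref{prop:EGisMBB}, which identifies $\calP$ with the projection onto the price coordinates of the KKT system for~\eqref{pro:EG}. That system is a finite conjunction of continuous (in)equalities in the bounded variables $\vecx$, $\vecp$, and the Lagrange multipliers $\gamma_i$ (bounded because the unique equilibrium utilities from Proposition~\ref{prop:EGisMBB} pin $\gamma_i$ to a compact range). Its solution set is therefore closed; projecting out the bounded auxiliary variables preserves closedness, so $\calP$ is compact and $L$ attains its supremum at some $\vecp^\star \in \calP$.

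Finally I would use Theorem~\ref{thm:lattice} to promote $\vecp^\star$ from $L$-maximizer to coordinate-wise maximizer. For any $\vecp' \in \calP$, the join $\vecp^\star \lor \vecp' \in \calP$ satisfies
\[ L(\vecp^\star \lor \vecp') \;=\; L(\vecp^\star) + \sum_{j \,:\, p'_j > p^\star_j} (p'_j - p^\star_j) \;\ge\; L(\vecp^\star), \]
so maximality of $L(\vecp^\star)$ forces the extra sum to vanish, giving $\vecp' \le \vecp^\star$ coordinate-wise for every $\vecp' \in \calP$. Thus $\vecp^\star$ is the pointwise largest element of $\calP$ and, by the first paragraph, attains the maximum seller revenue. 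A symmetric argument, replacing $\lor$ by $\land$ and maximizing by minimizing $L$, produces the pointwise smallest element and the minimum revenue.

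The main obstacle is establishing closedness of $\calP$: the raw characterization ``$x_{ij}>0 \Rightarrow u_{ij}/p_j = \alpha_i$'' of MBB allocations is delicate at limit points where allocations may collapse to zero on some goods. Routing through the KKT formulation of Proposition~\ref{prop:EGisMBB} sidesteps this by expressing $\calP$ via closed algebraic conditions, after which the lattice-plus-linear-functional argument finishes the proof cleanly.
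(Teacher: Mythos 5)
Your argument is correct, and it fills a genuine gap that the paper glosses over. The paper presents the corollary as an immediate consequence of Theorem~\ref{thm:lattice}, but a lattice by itself need not possess a greatest or least element (consider an open interval with the usual order): one needs the additional fact that $\calP$ is compact, or some other existence argument. Your route supplies exactly this: you express seller revenue as $L(\vecp)=\sum_j p_j$ via Walras' law, establish boundedness of $\calP$ by $p_j\le\sum_i M_i$, establish closedness by identifying $\calP$ with a projection of the (closed, bounded) KKT solution set for~\eqref{pro:EG}, maximize $L$ over the resulting compact set, and then use the join operation from Theorem~\ref{thm:lattice} to upgrade the $L$-maximizer to a coordinate-wise maximum. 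Each step holds up: the boundedness of the multipliers $\gamma_i$ relies on $u_i(\vecx)$ being pinned down and strictly positive by Proposition~\ref{prop:EGisMBB}, and the KKT conditions are finite disjunctions/conjunctions of closed constraints, so the projection argument is sound. The paper, by contrast, effectively proves existence of the extremal elements only later and constructively: Lemma~\ref{lem:maxPrice} shows that the descending-price algorithm's output dominates every $\vecp'\in\calP$, and Section~\ref{sec:minPrice} does the analogous thing for the minimum. So the two proofs differ in flavor — yours is a short, self-contained, non-constructive topological argument available already in Section~\ref{sec:prelim}, whereas the paper's justification is algorithmic and deferred. Your version is arguably cleaner as a proof of the existence claim; the paper's version has the advantage of coming bundled with an efficient computation.
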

\setcounter{exmpl}{1}
\begin{exmpl} Consider the following market with two buyers and two goods. Let $u_{11} = u_{12} = u_{22} = 1$ and $u_{21} = 0$. Let $M_1 = M_2 = 1$ and $c_1 = 1$. Then $x_{11} = x_{22} = 1$, $x_{12} = 0$, $p_1 = p_2 = 1$ is a modest MBB equilibrium with maximum revenue. A modest MBB equilibrium with minimum revenue has the same allocation and $p_1 = 0$ and $p_2 = 1$. 
\end{exmpl}

\newcommand{\norm}[1]{\| #1 \|}
\newcommand{\twonorm}[1]{\norm{#1}_2}
\newcommand{\tr}{\tilde{r}}
\newcommand{\tM}{\tilde{M}}
\newcommand{\tp}{\tilde{p}}
\newcommand{\nfrac}{\nicefrac}
\newcommand{\0}{\mbox{\boldmath $0$}}

\section{Computing a Modest MBB Equilibrium with Maximum Revenue}
\label{sec:algorithm}

In this section, we describe an efficient algorithm to compute a modest MBB equilibrium. In fact, we compute the one with coordinate-wise highest prices and maximum revenue among all modest MBB equilibria. Define the {\it active budget} of buyer $i$ at prices $\vecp$ as $M^a_i = \min\{M_i, c_i/\alpha_i\}$, where $\alpha_i = \max_{j \in G} u_{ij}/p_j$ is the MBB ratio. The active budget of buyer $i$ is the minimum of $M_i$ and the minimum amount of money needed to buy a bundle of goods with utility $c_i$. If $M^a_i = c_i/\alpha_i$ then buyer $i$ is capped, otherwise uncapped. 

\subsection{Flow Network and Initialization}
Given prices $\vecp$, let $A=\{(i,j) \subseteq B \times G \mid u_{ij}/p_j = \alpha_i\}$ be the set of equality edges, and the bipartite graph $(B\cup G, A)$ be the {\em equality graph}. We set up the following flow network $N_p$ using the equality graph by adding a source $s$ and sink $t$. It has nodes $\{s,t\} \cup B \cup G$ and edges $(s,i)$ for $i \in B$, $(j,t)$ for $j \in G$ and the equality edges. The edge $(s,i)$ has capacity $M^{a}_i$, and the edge $(j,t)$ has capacity $p_j$. The equality edges have infinite capacity. The flow in the network corresponds to money. We will maintain the following invariants throughout the algorithm.\medskip

\noindent{\bf Invariants:}
\vspace{-0.15cm}
\begin{itemize} 
\item The edges out of $s$ are saturated.
\item Prices and active budgets never increase.
\item Total utility of a buyer never decreases. Once a buyer is capped, it stays capped.
\end{itemize}

We initialize the prices to large values, namely $p_j = \sum_i M_i$. Wlog we will assume that all budgets, caps, and utilities are integers. 

The surplus (residual capacity) of good $j$ is $r_j = p_j - f_{jt}$, where $f_{jt}$ is the flow from good $j$ to $t$. Then $1 - f_{jt}/p_j$ is the
fraction of good $j$ that is not sold. We also keep track of the allocations $x_{ij}$. There might be prices equal to zero and then the allocation cannot be computed from the money flow. Goods that have price zero have no surplus. There is no money flowing through them, although they may be (partially) allocated. 

A subset $T$ of buyers is called {\em tight} with respect to prices $\vecp$ if $\sum_{i \in T} M^a_i = \sum_{j \in \Gamma(T)} p_j$, where $\Gamma(T) \subseteq S$ is the set of goods connected to $T$ in the equality graph.

A {\em balanced flow} is a maximum flow in $N_p$ which minimizes the 2-norm of surplus vector $r$. Let $|r| = |r_1| + \ldots + |r_n|$ and $\norm{r} =
(r_1^2 + \ldots + r_n^2)^{1/2}$ be the $\ell_1$ and $\ell_2$ norm of $r$, respectively. 

\subsection{The Algorithm}

\begin{figure*}[t]
\centerline{\framebox{\parbox{5.0in}{
\begin{tabbing}
555\=555\=555\=555\=\kill
\>{\bf Input:} A market with a set of buyers $B$ and a set of goods $G$; \\
\>\>\> Budget $M_i$, happiness cap $c_i$, and utility parameters $u_{ij},\ \forall i\in B, j\in G$; \\
\>{\bf Output:} Equilibrium prices $\vecp$, allocation $x$; \\ [0.3em]
\> $n \ot |B|;\ \ m \ot |G|; \ \ U \ot \max_{i\in B,j\in G}\{M_i, c_i, u_{ij}\}; \ \ \epsilon \ot {1}/((m+n)U^{4(m+n)})$; \\ 
\>Initialize price $p_j \ot \sum_i M_i$ for each good $j$; \\
\>Initialize active budget $M^a_i \ot \min\{M_i, \min_j c_ip_j/u_{ij}\}$ for each buyer $i$; \\ 
\> $f_{ij} \ot 0,\ x_{ij} \ot 0,\ \forall i \in B, j \in G$; \\[0.3em] 
\>{\bf Repeat} \ // phase \\
\>\> $f \ot$ balanced flow in $N_p$;\ \ $x_{ij} \ot f_{ij}/p_j$ if $p_j \neq 0$; \ $r_j \ot p_j - f_{jt}$; \\
\>\> $\delta \ot \max_j r_j$; Pick a good $j$ with surplus $\delta$; \\
\>\> $S \ot \{j\} \cup \{k \in G \mid k \text{ can reach } j \text{ in the residual network w.r.t. } f \text{ in } N_p \setminus \{s, t\}\}$; \\
\>\>{\bf Repeat} \ // iteration\\
\>\>\>  $B' \ot $ Set of buyers who have incident equality edges to $S$; \\
\>\>\> $B'_c \ot$ Set of capped buyers in $B'$ (a buyer $i$ is capped if $M^a_i = \min_j \nfrac{c_ip_j}{u_{ij}}$); \\
\>\>\> $B'_u \ot B'\setminus B'_c$ (set of uncapped buyers); \\
\>\>\> $x\ot 1$; Define prices and active budgets as follows: \\
\>\>\>\> $p_j \ot xp_j,\  \forall j \in S$;\ $M^a_i \ot xM^a_i,\ \forall i \in B_c'$; \\ 
\>\>\> Decrease $x$ continuously down from 1 until one of the following events occurs\\[0.3em]
\>\>\> {\bf Event 1:} An uncapped buyer becomes capped \\ 
\>\>\> {\bf Event 2:} A new equality edge appears \\ 
\>\>\>\> Recompute $N_p$;\\
\>\>\>\> $f \ot $ balanced flow in $N_p$; \ $x_{ij} \ot f_{ij}/p_j$ if $p_j \neq 0$; \\
\>\>\>\> $S \ot S \cup \{j \in G \mid j \mbox{ can reach } S \mbox{ in the residual network w.r.t. } f \mbox{ in } N_p \setminus \{s, t\}\}$; \\[0.3em]
\>\>\> {\bf Event 3:} A subset of $B'$ becomes tight // phase ends\\[0.3em]
\>\>{\bf Until} Event 3 occurs; \\[0.3em]
\>{\bf Until} $|r| \le \epsilon$;\\
\> Recompute $N_p$;\\
\> $f \ot $ balanced flow in $N_p$; \ $x_{ij} \ot f_{ij}/p_j$ if $p_j \neq 0$;
\end{tabbing}\vspace{-1em}
}}}\caption{The complete algorithm}\label{fig:algo}
\end{figure*}

The complete algorithm is shown in Figure~\ref{fig:algo}. We initialize price $p_j$ of each good $j$ to $\sum_i M_i$. This ensures that the invariants are satisfied, namely a maximum flow in network $N_p$ saturates all edges out of $s$. We initialize every active budget $M^a_i = \min\{M_i, \min_j c_ip_j/u_{ij}\}$, and flow $f$ and allocation $x$ equal to zero.

The algorithm is divided into a set of phases, and each phase is further divided into a set of iterations. A phase starts with the computation of a balanced flow in $N_p$. Let the surplus of good $j$ be $p_j - f_{jt}$. We pick a good $j$ with maximum surplus, and we compute a set of goods $S$ containing $j$ and the goods which can reach $j$ in the residual network corresponding to $N_p$ without using nodes $s$ and $t$. The surplus of each good in $S$ is the same, and maximum among all goods. We denote by $B'$ the set of buyers who have equality edges to goods in $S$, and by $B'_c$ and $B'_u$ the sets of capped and uncapped buyers in $B'$, respectively. Note that $x_{ij} = 0$ for all $i \in B'$ and $j \not\in S$, since $x_{ij} > 0$ would imply $j \in S$. 

We begin with an iteration, where we use a factor $x$ to set the price of each good $j\in S$ to $x p_j$ and the active budget of each buyer $i \in B'_c$ to $xM^a_i$. The prices and active budgets of the remaining goods and buyers remain unchanged. We decrease $x \le 1$ continuously until some structural change happens. Our goal here is to decrease prices as much as possible. By changing prices in this manner, all the equality edges between $B'$ and $S$ stay intact and the equality edges between $B'$ and $G\setminus S$ become non-equality.

A possible structural change is that an uncapped buyer becomes capped. When a buyer $i \in B'$ is uncapped, $M_i < \min_j c_i p_j/u_{ij}$. Prices are decreasing, so this may become an equality. We term the first such change Event 1. Then we move buyer $i$ from $B'_u$ to $B'_c$. 

Another possible change is that a new equality edge appears from a buyer in $B \setminus B'$ to a good in $S$. Prices of goods in $S$ are decreasing, so goods in $S$ are becoming attractive to buyers outside $B'$. Note that there cannot be a new equality edge from a buyer in $B'$ to a good outside $S$. We term the first such change Event 2. Then we recompute the flow network $N_p$ and a balanced flow in $N_p$. Next, we compute the set $S'$ of goods $j \in G \setminus S$ that can reach a good in $S$ in the residual graph corresponding to $N_p$ without using the nodes $s$ and $t$. Due to the property of balanced flows, the surplus of each good in $S'$ is at least the surplus of some good in $S$. Finally, we add goods in $S'$ to $S$. 

Apart from the structural changes, we also maintain the invariants. The only invariant that can become violated with these changes is that the edges out of $s$ are saturated. Hence, we need to stop when a subset $T$ of $B'$ becomes tight. Clearly, if prices are decreased further, then buyers in $T$ will not be saturated, so we stop decreasing prices at this stage. We term this Event 3, and then the phase ends. We show in Lemma~\ref{lem:gd} below that during a phase, the 2-norm of the surplus vector decreases geometrically. 
The last phase ends when the total surplus becomes tiny. In fact, we will show that the surplus is actually zero at this point. We recompute a balanced flow and terminate. 


%

When the prices of a set of goods hit zero in an iteration of the algorithm, then we do not change the allocation of these goods, and all the buyers interested in these goods must be capped. Since each buyer gets a modest allocation before the prices hit zero, the same allocation remains modest. None of the goods in the set is completely allocated. We delete these goods and the buyers to which they are allocated from consideration.

\newcommand{\sset}[1]{\{ #1 \}}
\setcounter{exmpl}{0}

\begin{exmpl}[continued] 
Consider our algorithm applied to the example market above. We initialize $p_1$ and $p_2$ to $M_1 + M_2 = 4$. The active budgets become $M_1^{a} = \min_j
c_1 p_j/u_{ij} = 4/5$ and $M_2^{a} = 1$. The edges $(1,1)$ and $(2,1)$ are equality edges and the balanced flow is $f_{11} = 4/5$, $f_{21} = 1$, and
$f_{12} = f_{22} = 0$. The surpluses are $r_1 = 4 - 9/5 = 11/5$ and $r_2 = 4 - 0 = 4$. Thus $S = \sset{2}$. We decrease $p_2$ to $xp_2$. At $x = 1/2$,
the edge $(2,2)$ becomes an equality edge. Now $p_2 = 2$. The balanced flow does not change and hence $r_1 = 11/5$ and $r_2 = 2$. Thus $S = \sset{1}$.
We decrease $M_1^{a}$ to $4x/5$ and $p_1$ to $4x$. At $x = 5/16$, $B'$ becomes tight. We now have $M_1^{a} = 1/4$ and $p_1 = 5/4$. The balanced flow is
$f_{11} = 1/4$ and $f_{21} = 1$. Thus $r_1 = 0$ and $r_2 = 2$. So $S = \sset{2}$. We change $p_2$ to $p_2 x$. At $x = 5/16$, the edge $(2,2)$ becomes
an equality edge. Now $p_2 = 5/8$. The balanced flow is $f_{11} = 1/4$, $f_{21} = 11/16$, and $f_{22} = 5/16$. Then $r_1 = r_2 = 5/16$. Thus $S =
\sset{1,2}$. We now decrease $M_1^{a}$ to $x \cdot 1/4$, $p_1$ to $5x/4$ and $p_2$ to $5x/8$. At $x = 8/13$, $B'$ becomes tight and we have $p_1 = 10/13$, $p_2 = 5/13$, $M_1^{a} = 2/13$, $x_{11} = 1/5$, $x_{21} = 4/5$, $x_{22} = 1$, $f_{11} = 2/13$, $f_{21} = 8/13$, and $f_{22} = 5/13$. \hfill $\blacksquare$
\end{exmpl}

\subsection{Analysis}
\begin{lemma}\label{lem:inv}
The invariants hold during the run of the algorithm. 
\end{lemma}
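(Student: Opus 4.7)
I will verify the three invariants simultaneously by induction on the events processed by the algorithm (initialization, iterations of the inner loop, and Events~1--3). Invariants~2 and~3 follow almost directly once Invariant~1 is in place; Invariant~1 is the main obstacle.

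\textbf{Base case and easy invariants.} At initialization we set $p_j = \sum_i M_i$ for every $j$, so $c_i p_j / u_{ij} \ge M_i$ for every $i, j$, which makes every buyer uncapped with $M^a_i = M_i$. The initial $N_p$ admits a source-saturating flow because each sink edge has capacity $\sum_i M_i \ge M_i$. Invariant~2 is direct from the pseudocode: prices and active budgets are only ever multiplied by $x \le 1$, and the Event~1 transition $M^a_i = M_i \mapsto M^a_i = c_i / \alpha_i$ is continuous, since the two values coincide at the moment of transition. For Invariant~3, since prices never increase, $\alpha_i = \max_j u_{ij}/p_j$ is non-decreasing. Given source-saturation and the fact that money flows only through equality edges, the utility of buyer $i$ equals $\alpha_i M^a_i$: for uncapped buyers this is $\alpha_i M_i$, non-decreasing; for capped buyers $M^a_i = c_i/\alpha_i$ gives constant utility $c_i$; the Event~1 transition is continuous. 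Finally, once $c_i/\alpha_i \le M_i$, monotonicity of $\alpha_i$ preserves this inequality, so a capped buyer never reverts.

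\textbf{Invariant~1 (main obstacle).} The key structural fact is that no buyer outside $B'$ has an equality edge into $S$ (else they would be in $B'$), and $x_{ij} = 0$ for $i \in B'$, $j \notin S$. Hence the flow through $B' \cup S$ does not interact with the flow through the rest of the network during a scaling step. On the complementary piece, neither capacities nor the equality subgraph change, so source-saturation persists. On the $B' \cup S$ piece, the existence of a source-saturating max flow is equivalent by max-flow/min-cut to the Hall-type inequalities
\[ \sum_{i \in T} M^a_i \le \sum_{j \in \Gamma(T) \cap S} p_j \qquad \text{for every } T \subseteq B'. \]
Decreasing $x$ scales the right-hand side by $x$ uniformly across $S$, while scaling the left-hand side by $x$ only for the capped buyers in $T$, so the condition can only tighten. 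Event~3 is defined precisely as the first moment equality holds for some $T$, which is the boundary still compatible with saturation; the algorithm halts the decrease there and ends the phase. Event~1 changes no capacity (as observed for Invariant~2), and Event~2 can only enlarge some $\Gamma(T)$, so Hall's condition continues to hold; in both cases the recomputed balanced flow again saturates source edges.

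\textbf{Edge cases.} When a price hits zero during a scaling step, the algorithm freezes the corresponding allocation and removes the good together with its (capped) purchaser from the market. The removed buyer's utility is already at $c_i$ by modesty, so Invariants~2 and~3 persist trivially for that buyer, and the reduced market continues to satisfy Invariant~1 by the same Hall-type argument.
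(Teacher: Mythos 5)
Your proof is correct and follows essentially the same approach as the paper's, which is far terser: the paper simply observes that prices are non-increasing by construction, that active budgets are non-increasing as a consequence, that the tight-set event (Event 3) ensures saturation, and that utility $\alpha_i M^a_i$ therefore never decreases. Your writeup usefully spells out the Hall/min-cut argument that underlies the paper's one-line claim about the tight-set event, and handles the zero-price edge case that the paper treats only in prose outside the lemma.

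One small slip worth flagging: at initialization you assert that $c_i p_j/u_{ij} \ge M_i$ holds for every $i,j$, so that all buyers start uncapped with $M^a_i = M_i$. This is false in general — if some $u_{ij}$ is large relative to $c_i$ and $\sum_k M_k$, then $\min_j c_i p_j/u_{ij} < M_i$ and buyer $i$ starts capped. The algorithm's initialization $M^a_i \leftarrow \min\{M_i,\min_j c_i p_j/u_{ij}\}$ indeed allows this. Fortunately your saturation argument for the base case only uses $\sum_i M^a_i \le \sum_i M_i = p_j$ together with the fact that every buyer has at least one equality edge, and this holds whether or not any buyer starts capped, so the error is harmless; it just means the first sentence of your base case should be dropped.
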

\begin{proof}
Clearly, prices are non-increasing. As a result, the active budgets of buyers are also non-increasing. The tight-set event makes sure that buyers are always saturated. As a result, the total utility of each buyer never decreases, since he spends his entire active budget and prices are non-increasing. 
\end{proof} 
Phases consist of iterations, which end with Event 1, 2, or 3. A phase ends with Event 3. 
\begin{lemma}\label{lem:numiter}
Each phase has at most $2n$ iterations.
\end{lemma}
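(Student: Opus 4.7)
The plan is to use a potential argument with the function $\Phi = |B'| + |B'_c|$. Since both $|B'|$ and $|B'_c|$ lie in $[0,n]$, we have $\Phi \le 2n$ throughout a phase. I will show that each iteration that ends with Event 1 or Event 2 strictly increases $\Phi$ by at least one, which then bounds the total number of iterations by $2n$.

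For Event 1 the change is immediate: the affected buyer moves from $B'_u$ to $B'_c$, so $|B'|$ is unchanged and $|B'_c|$ grows by one. The key step is Event 2. I will argue that the new equality edge must connect some buyer $i \in B \setminus B'$ to a good in $S$, so that $|B'|$ strictly grows. The reasoning uses the iteration's update rule: during the iteration, prices of goods in $S$ are scaled uniformly by a common factor $x \in (0,1)$, while prices of goods outside $S$ stay fixed. For any buyer $i \in B'$, the MBB ratio $\alpha_i$ is attained on some $j^* \in S$, and every ratio $u_{ij}/p_j$ with $j \in S$ scales by the same factor $1/x$; hence no previously non-MBB good in $S$ can newly become MBB for $i$. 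For any $j' \notin S$, the ratio $u_{ij'}/p_{j'}$ is unchanged while $\alpha_i$ strictly increases, ruling out new equality edges from $B'$ to goods outside $S$. Thus no new equality edge can involve a buyer already in $B'$, and the new edge must pull in a fresh buyer from $B \setminus B'$, who then enters $B'$.

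Combining the two cases, every non-terminating iteration raises $\Phi$ by at least one, while the phase is closed by the unique Event 3 that triggers when a tight subset of $B'$ appears. With $\Phi$ bounded above by $2n$, this caps a phase at $2n$ iterations. The main obstacle is the structural analysis of Event 2, namely verifying that uniform scaling of prices on $S$ cannot create new equality edges among buyers already in $B'$; once this is established, the potential argument concludes the proof.
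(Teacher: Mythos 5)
Your proposal is correct and follows essentially the same route as the paper's own proof. The paper bounds the number of Event~1 iterations by $n$ (once a buyer is capped it stays capped, by the invariant) and the number of Event~2 iterations by $n$ (each such event adds a buyer to $B'$ and no buyer ever leaves $B'$), with Event~3 closing the phase; your potential $\Phi = |B'| + |B'_c|$ packages the same two monotone counters into a single quantity bounded by $2n$. Your explicit verification that uniform scaling of prices on $S$ cannot create new equality edges incident to buyers already in $B'$ is exactly the (briefly stated) observation the paper relies on when it asserts that an Event~2 edge must come from a buyer outside $B'$, so the two arguments match in substance.
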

\begin{proof}
Each iteration ends with one of the three events. In case of Event 1, an uncapped buyer becomes capped, and there can be at most $n$ iterations of this kind (Lemma~\ref{lem:inv}). In case of Event 2, a new equality edge arises from a buyer outside $B'$ to a good in $S$. This adds at least one new buyer to $B'$. Further prices are changed in a way so that no buyer leaves $B'$, hence the number of such events are again at most $n$. In case of Event 3, the phase ends. 
\end{proof} 

Our next goal is to show that the 2-norm of the surplus vector decreases substantially during a phase. Let $r$ and $r'$ be the surplus vectors at the beginning and at the end of a phase respectively. For the purpose of our analysis we also maintain an intermediate flow $f$ continuously as we change prices in each iteration; this flow is \emph{not} maintained by the algorithm.  When we recompute a balanced flow during Event 2, then $f$ will be reset to the balanced flow. It is defined as 
$\forall i\in B'_c:  f_{ij} \ot xf_{ij} \text{ and } \forall i\in B'_u:  f_{ij} \ot f_{ij}.$
$f$ ensures that all buyers are saturated. If the surplus of a good $j$ becomes zero corresponding to $f$, then we keep its surplus equal to zero and reroute extra flow from $j$ to some other good with positive surplus, using a path in the residual network corresponding to $f$. If there is no such path, then this implies Event 3 has occurred, in which case the current phase is done. Consider an intermediate iteration $t$. With respect to $f$, let $r^t = (r^t_1, \dots, r^t_m)$ be the surplus vector at the beginning of iteration $t$, and let $\tr^t = (\tr^t_1, \dots, \tr^t_m)$ be the surplus vector before we recompute a balanced flow in iteration $t$ if Event 2 occurs. 
\begin{lemma}\label{lem:rdec}
$\tr_j^t \le r^t_j,\ \forall j \in G$, and $\norm{r^{t+1}} \le \norm{\tr^t} \le \norm{r^t}$. 
\end{lemma}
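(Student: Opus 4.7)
The plan is to split the goods into those in $S$ and those outside $S$, track the surplus of each during iteration $t$, and finally use the defining property of a balanced flow for the recomputation step.

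First I would verify $\tr^t_j = r^t_j$ for every $j \notin S$. During iteration $t$ only the prices of goods in $S$ and the active budgets of buyers in $B'_c$ are perturbed; moreover, as soon as prices in $S$ strictly decrease the MBB ratio of every $B'$-buyer strictly rises, destroying every equality edge from $B'$ to $G\setminus S$. Since $x_{ij}=0$ for $i\in B'$, $j\notin S$ already at the start of the iteration, no flow crosses from $B'$ to $G\setminus S$ during the iteration, so the subnetwork on $(B\setminus B')\cup (G\setminus S)$---together with its prices, active budgets, and equality edges---stays completely inert and its surpluses are preserved.

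Next I would argue that $r_j(x)$ is non-increasing in $(1-x)$ for every $j\in S$. Writing $A_j = \sum_{i\in B'_c}f^t_{ij}$ and $B_j = \sum_{i\in B'_u}f^t_{ij}$, the prescribed update rule gives total inflow $xA_j+B_j$ at parameter $x$ and capacity $xp_j$, so
\[
  r_j(x) \;=\; r^t_j \;-\; (1-x)(p_j-A_j).
\]
Since goods in $S$ only receive flow from $B'$, one has $r^t_j = p_j - A_j - B_j$, hence $p_j - A_j = r^t_j + B_j \ge 0$, and the natural trajectory is monotonically non-increasing. Whenever some $r_j(x)$ would fall to $0$, the algorithm reroutes the excess along a residual path (necessarily inside $B'\cup S$, because the only equality edges touching $S$ land in $B'$) to a good $j'$ with currently positive surplus; such a push pins $r_j$ at $0$ and moves more flow into $j'$, which strictly decreases $r_{j'}$. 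Chaining these two observations gives that every $r_j$ is non-increasing throughout the iteration, and in particular $\tr^t_j \le r^t_j$ for every $j$.

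The inequality $\norm{\tr^t}\le\norm{r^t}$ follows from nonnegativity of the surpluses. For $\norm{r^{t+1}}\le\norm{\tr^t}$, I would use that at the moment of recomputation the intermediate flow $\tilde{f}^t$ saturates every $(s,i)$-edge by the invariant, so its value equals the capacity of the $s$-cut in the updated network $N_p$ and $\tilde{f}^t$ is therefore already a maximum flow there. Since the new balanced flow $f^{t+1}$ is by definition a maximum flow minimising the $\ell_2$-norm of surpluses, the desired inequality follows at once. The main subtle point I anticipate is the rerouting argument: one has to verify that the residual redirection remains inside the $(B'\cup S)$-component, always targets goods of strictly positive surplus, and composes continuously as $x$ decreases without any individual $r_j$ temporarily exceeding $r^t_j$; the rest is a direct check of the dynamics.
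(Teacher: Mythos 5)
Your proposal is correct and follows the same outline as the paper's proof: goods outside $S$ are inert, the surplus of each good in $S$ decreases linearly in $x$ before rerouting, rerouting only further pushes surplus down, and the final inequality comes from the balanced-flow recomputation. The one place you add genuine detail beyond the paper is making the monotonicity explicit by writing $r_j(x)=r^t_j-(1-x)(p_j-A_j)$ with $p_j-A_j=r^t_j+B_j\ge 0$, whereas the paper simply compares the two expressions for $r^t_j$ and $\tr^t_j$; and you explicitly observe that the intermediate flow is still a maximum flow (because all $(s,i)$-edges remain saturated), which the paper's phrasing ``the flow at $\tr^t$ is feasible, and the balanced flow can only make the norm better'' leaves implicit. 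Both observations are correct and worth spelling out, but they are elaborations of the same argument, not a different route.
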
 
\begin{proof}
For the first part, let $p^t$ and $f^t$ respectively denote prices and flows at the beginning of iteration $t$. Note that both prices and flows are unchanged for goods outside $S$, hence $\tr_j^t = r^t_j, \forall j \not\in S$. For goods in $S$ we have $r_j^t = p_j^t - \sum_{i\in B_c'} f_{ij}^t - \sum_{i\in B_u'} f_{ij}^t$ and $\tr^t_j = xp_j^t - x\sum_{i \in B_c'} f_{ij}^t - \sum_{i\in B_u'} f_{ij}^t$ for an $x\le 1$, which implies that $\tr_j^t \le r^t_j,\ \forall j \in S$ before surplus of some good becomes zero. Further, when the surplus of a good $j$ becomes zero, we reroute extra flow from $j$ to some other good $k$. This will further decrease the surplus of $k$. This proves the first part.

Due to the first part, we have $\norm{\tr^t} \le \norm{r^t}$. For the first inequality of the second part, note that $r^{t+1}$ is different from $\tr^t$ only due to recomputation of the balanced flow. The flow at $\tr^t$ is feasible, and the balanced flow can only make the norm better than the norm at $\tr^t$. The second part follows.
\end{proof}
\begin{lemma}\cite{DevanurPSV08}\label{lem:decsur}
Suppose $f$ and $f^*$ are a feasible and a balanced flow in $N_p$, resp., and $r$ and $r^*$ are the surplus vectors w.r.t.\ $f$ and $f^*$, resp. If $r^*_j = r_j - \delta$ for some good $j$ and $\delta >0$, then $\norm{r^*}^2 \le \norm{r}^2 - \delta^2$.
\end{lemma}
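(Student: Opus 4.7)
The plan is to exploit the defining property of a balanced flow: $f^*$ is, by construction, the minimizer of the strictly convex function
\[ \phi(f) \;=\; \sum_j (p_j - f_{jt})^2 \;=\; \norm{r(f)}^2 \]
over the polytope of maximum flows in $N_p$. Under the running invariant of the algorithm the edges out of $s$ are saturated in $f$, so $|f| = \sum_i M^a_i$ is the maximum possible flow value, and $f$ also lies in this same polytope. Hence the first-order optimality condition at $f^*$ applies and yields $\nabla\phi(f^*)\cdot(f - f^*) \ge 0$.

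Next I would compute this gradient concretely. Only the edges $(j,t)$ contribute, and $\partial\phi/\partial f_{jt} = 2(f_{jt} - p_j) = -2 r_j$. Combining this with $f_{jt} - f^*_{jt} = r^*_j - r_j$, the first-order inequality collapses to
\[ \sum_j r^*_j\,(r_j - r^*_j) \;\ge\; 0. \]

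To finish, I would use the elementary identity $r_j + r^*_j = 2 r^*_j + (r_j - r^*_j)$ to expand
\[ \norm{r}^2 - \norm{r^*}^2 \;=\; \sum_j (r_j - r^*_j)(r_j + r^*_j) \;=\; \sum_j (r_j - r^*_j)^2 \;+\; 2 \sum_j r^*_j\,(r_j - r^*_j). \]
The second sum is nonnegative by the first-order condition just derived, and the first sum dominates the single term corresponding to the special index $j$ given in the hypothesis, which contributes $(r_j - r^*_j)^2 = \delta^2$. Together these give $\norm{r^*}^2 \le \norm{r}^2 - \delta^2$, as required.

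The only real subtlety, and thus the main obstacle, is pinning down the setup for the first-order condition: one must check that $f$ genuinely lies in the polytope of \emph{maximum} flows (not merely feasible flows), since a non-maximum feasible flow would have strictly more total surplus than $f^*$ and the key inequality $\sum_j r^*_j(r_j - r^*_j) \ge 0$ could fail. This is exactly what the paper's invariant ``edges out of $s$ are saturated'' buys us, and once it is invoked the rest is a one-line algebraic manipulation.
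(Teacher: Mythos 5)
The paper itself does not prove this lemma; it cites it from Devanur, Papadimitriou, Saberi and Vazirani~\cite{DevanurPSV08}, where the argument is combinatorial: it decomposes the difference of the two flows into paths in the residual network of $f^*$ and uses the structural property of balanced flows (that in the residual graph of $f^*$, excluding $s$ and $t$, there is no path from a good with smaller surplus to a good with larger surplus). Your proof instead treats the balanced flow as the minimizer of a smooth convex objective over a polytope and invokes first-order optimality; the resulting two-line algebraic finish is cleaner, and you also silently get the stronger conclusion $\norm{r}^2 - \norm{r^*}^2 \ge \sum_j (r_j - r^*_j)^2$. Both routes are valid; yours trades combinatorial bookkeeping for a one-step variational inequality.

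One remark on your caveat. You worry that the first-order condition $\nabla\phi(f^*)\cdot(f-f^*) \ge 0$ might fail for a feasible but non-maximum $f$, and you rescue this by appealing to the algorithm's invariant. This is unnecessary, and strictly speaking it proves a weaker statement than the lemma (which only assumes $f$ feasible). The point is that $f^*$ already minimizes $\phi(f)=\sum_j(p_j - f_{jt})^2$ over \emph{all} feasible flows, not merely over maximum flows: a simple $s$--$t$ augmenting path in the residual graph of a non-maximum feasible flow increases the flow into exactly one sink edge $(j,t)$ with $r_j>0$ while leaving all other sink-edge flows unchanged, so it strictly decreases $\phi$. Hence the minimum of $\phi$ over the convex polytope of feasible flows is attained only at maximum flows and therefore at balanced flows, and the variational inequality $\nabla\phi(f^*)\cdot(f-f^*)\ge 0$ holds for every feasible $f$. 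With this observation your proof gives the lemma exactly as stated, with no reference to the algorithm's state.

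Also note that the lemma's hypothesis is stated as an equality $r^*_j = r_j - \delta$, but your argument---like the cited one---establishes the conclusion under the weaker hypothesis $r^*_j \le r_j - \delta$; this is harmless and is in fact how the lemma is applied.
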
 
\begin{lemma}\label{lem:gd}
$\norm{r'}^2 \le (1-\nfrac{1}{4mn})\norm{r}^2$. 
\end{lemma}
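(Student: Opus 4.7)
The plan is to reduce the desired bound to a single-good surplus-drop claim via two observations. First, $\delta = \max_j r_j$ satisfies $m \delta^2 \ge \norm{r}^2$, since $r$ has at most $m$ components each bounded by $\delta$; hence it suffices to exhibit a drop in the 2-norm squared of at least $\delta^2/(4n)$ across the phase, because $\delta^2/(4n) \ge \norm{r}^2/(4mn)$. Second, the balanced-flow property that mutually residual-reachable goods share a surplus implies that at the start of each iteration $t$ all goods in the current $S$ share a common surplus $\delta_t$; initially $\delta_1 = \delta$, and $(\delta_t)$ is non-increasing across iterations by Lemma \ref{lem:rdec}. Since $S$ only grows across iterations (via Event 2) and $\Gamma(T) \subseteq S$ at phase end, each good in $\Gamma(T)$ holds surplus $\delta_{t^\ast}$ at the start of the last iteration $t^\ast$ and drops to $0$ by phase end.

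The core claim is that some good's surplus drops by at least $\delta/(2\sqrt{n})$ at some point in the phase, from which Lemma \ref{lem:decsur} yields the required norm-squared decrease. I would establish this by case split on $\delta_{t^\ast}$. If $\delta_{t^\ast} \ge \delta/(2\sqrt{n})$, then every good in $\Gamma(T)$ suffers a drop of at least $\delta/(2\sqrt{n})$ during the last iteration, and invoking Lemma \ref{lem:decsur} on the intermediate flow $f$ maintained in the analysis versus the final balanced flow $f'$ (both in the same network $N_{p'}$) delivers the bound directly. Otherwise $\delta_{t^\ast} < \delta/(2\sqrt{n})$, so the common surplus dropped by more than $\delta/2$ across at most $2n$ iterations (Lemma \ref{lem:numiter}), and by pigeonhole some iteration $t$ exhibits a common-surplus drop of at least $\delta/(4n)$. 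The continuous price scaling and flow rerouting in that iteration then give a per-good surplus drop, and Lemma \ref{lem:decsur} applied in the appropriate network extracts the quantitative decrease.

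The main obstacle I foresee is the second case, since a naive application of Lemma \ref{lem:decsur} to a single-good drop of $\delta/(4n)$ yields only $\delta^2/(16n^2)$, a factor of $n$ too small. To overcome this, one needs to exploit that in any iteration where the common surplus drops by some amount, every good in $S$ drops by the same amount simultaneously, and to sum these contributions carefully through the intermediate flow $f$ maintained in the analysis. A secondary subtlety is that Lemma \ref{lem:decsur} compares a feasible and a balanced flow \emph{in the same network}, whereas prices change between iterations; the intermediate flow $f$ defined in the analysis (and which is reset at each balanced-flow recomputation) is precisely the device that keeps the comparison within a single network, so the bookkeeping of this flow and its residual reroutings will be the most delicate part of the proof.
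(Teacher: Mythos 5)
Your high-level reduction (it suffices to show a norm-squared drop of $\delta^2/(4n)$, since $\norm{r}^2 \le m\delta^2$) matches the paper, but the core argument diverges and contains a genuine gap that you yourself identify without resolving. In your second case, pigeonhole gives a single iteration where the tracked surplus drops by $\ge \delta/(4n)$, and Lemma~\ref{lem:decsur} applied to one good then yields only $\delta^2/(16n^2)$, short of the target by a factor of $n$. Your proposed fix---``every good in $S$ drops by the same amount simultaneously, sum these contributions''---does not close the gap: $|S|$ can be as small as $1$, so no multiplicity is guaranteed, and Lemma~\ref{lem:decsur} as stated gives a bound for a single coordinate, not a simultaneous multi-coordinate drop. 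You also assert that all goods in $S$ share a common surplus at the start of every iteration; that is true at the start of a phase, but after Event~2 extends $S$ the paper only establishes that newly added goods have surplus at least that of \emph{some} good already in $S$, so a common value is not maintained. The paper therefore tracks $\gamma = \min_{j \in S} r_j$, not a common surplus.

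The paper's argument avoids the pigeonhole trap entirely by never isolating a single large drop. It decomposes the total decrease of $\gamma$ from $\delta$ to $0$ across all iterations of the phase into at most $4n$ nonnegative pieces $\delta_{i1}, \delta_{i2}$ (the drop before vs.\ after the balanced-flow recomputation in each of the $\le 2n$ iterations). Each piece $\delta_{ik}$ is shown---via Lemmas~\ref{lem:rdec} and~\ref{lem:decsur} applied within a single network---to cause a decrease of the squared $2$-norm by at least $\delta_{ik}^2$. Since $\sum_{i,k} \delta_{ik} \ge \delta$ and there are at most $4n$ terms, Cauchy--Schwarz gives $\sum_{i,k}\delta_{ik}^2 \ge \delta^2/(4n)$, exactly the quantity needed. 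This ``sum of squares of the decomposition pieces'' step is the missing idea in your proposal: it replaces ``find one large drop'' with ``all the small drops, squared, already add up.''
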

\begin{proof}
Consider the value of $\gamma = \min\{r_j\ |\ j \in S\}$ during a phase. When the phase begins, $\gamma =\delta$, and when it ends $\gamma=0$. Recall that $S$ only grows, and when we add a new good $k$ to $S$, then the surplus of $k$ is at least the surplus of some good already in $S$. This implies that $\gamma$ does not change when we add new goods to $S$. 

Let $t_1, \dots, t_l$ be the iterations where $\gamma$ decreases, and let $\delta_i > 0$ be the amount of decrease in iteration $t_i$. Further we break each $\delta_i$ into two parts $\delta_{i1}$ and $\delta_{i2}$ such that $\delta_i = \delta_{i1}+\delta_{i2}$. Here $\delta_{i1}$ is the amount of decrease due to the flow change before we recompute balanced flow, and $\delta_{i2}$ is the amount of decrease due to recomputation of balanced flow. Next consider only positive $\delta_{i1}$'s and $\delta_{i2}$'s. Clearly, $l \le 2n$ and $\sum_{i: \delta_{i1} > 0, \delta_{i2} >0} (\delta_{i1} + \delta_{i2}) \ge \delta$. Using Lemmas~\ref{lem:rdec} and~\ref{lem:decsur}, we have 
$\norm{r'}^2 \le \norm{r}^2 - (\delta_{11}^2 + \delta_{12}^2 + \dots + \delta_{l1}^2 + \delta_{l2}^2) \le \norm{r}^2 - \delta^2/4n.$
Since $\norm{r}^2 \le m\delta^2$, we have $\norm{r'}^2 \le (1-\nfrac{1}{4mn})\norm{r}^2$. 
\end{proof}

{\noindent \textbf{Polynomial Running Time.} $\;$}
In each iteration, the prices of goods in $S$ are multiplied by a value that itself depends on the prices. It is not obvious why the size of the numbers in the computation is polynomially bounded. Here we show that the sizes of intermediate prices and flows in our algorithm remain polynomially bounded. 
\begin{lemma} 
	\label{lem:connectFlow}
	All goods in $S$ are connected by equality edges at all times. There is no flow from buyers in $B'$ to goods outside $S$. 
\end{lemma}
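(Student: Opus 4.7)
\medskip
\noindent\textbf{Proof plan.} The plan is to prove both statements simultaneously by induction over the structural updates that occur in the algorithm: (i) the balanced-flow recomputation at the start of each phase, (ii) the continuous decrease of $x$ within an iteration, (iii) Event 1, and (iv) Event 2 (which is the only event that enlarges $S$ or $B'$). Throughout, I will repeatedly use the fact that in $N_p\setminus\{s,t\}$ the residual graph lives entirely on equality edges: a forward edge $i\to k$ needs $(i,k)$ to be an equality edge, and a backward edge $k\to i$ needs $f_{ik}>0$, which only happens on equality edges because flow is routed through the equality graph.

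\medskip
\noindent\textbf{Base case (start of a phase).} Right after the balanced flow $f$ is (re)computed and $S$ is initialized as $\{j\}$ together with all goods that reach $j$ in the residual network, the connectivity claim is immediate: every $k\in S$ has a residual path to $j$ whose internal edges are equality edges through buyers in $B'$. For the second claim, suppose some $i\in B'$ had $f_{ij'}>0$ for $j'\notin S$. Then $j'\to i$ is a backward residual edge on an equality edge, and since $i\in B'$ it has an equality edge to some $j_0\in S$, giving a forward residual edge $i\to j_0$. Concatenation yields a residual path $j'\to i\to j_0\to\cdots\to j$, forcing $j'\in S$, a contradiction.

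\medskip
\noindent\textbf{Inductive step.} During continuous decrease of $x$ (before any event fires), the prices of goods in $S$ and the active budgets of $B'_c$ scale by the common factor $x$; hence for every $i\in B'$ the ratio $u_{ij}/p_j$ for $j\in S$ scales uniformly by $1/x$, so all edges from $B'$ to $S$ that were equality edges remain so (the MBB value of $i$ is attained exactly on $S$). The flow $f$ is unchanged, so neither claim can break. Event 1 only relabels a buyer from uncapped to capped and affects neither the equality graph nor the flow. For Event 2, the algorithm recomputes a balanced flow in the new $N_p$, then augments $S$ with all goods that can reach the old $S$ in the new residual graph, and $B'$ becomes the set of buyers with equality edges to the new $S$. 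Exactly as in the base case, residual reachability supplies an equality-edge path from every newly added good to the old $S$ (through buyers that, by definition, belong to the new $B'$), preserving connectivity; and the same ``reroute'' argument as above shows that flow from new $B'$ to the complement of new $S$ would contradict the maximality of $S$ under residual reachability. Event 3 ends the phase, so no further verification is required.

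\medskip
\noindent\textbf{Main obstacle.} The only delicate point is Event 2. Because it simultaneously recomputes the balanced flow, enlarges $S$, and enlarges $B'$, one has to be careful that the three updates fit together: the enlargement rule for $S$ (residual reachability with respect to the new flow) has to be strong enough to (a) guarantee equality-edge connectivity of the enlarged $S$ via buyers that are then automatically in the enlarged $B'$, and (b) rule out any new flow edge from $B'$ to $G\setminus S$, by the same residual-path argument used in the base case. Once these two observations are spelled out for the recomputed flow, the induction closes.
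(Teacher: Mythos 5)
Your proof is correct and takes essentially the same approach as the paper's: equality-edge connectivity of $S$ follows because membership in $S$ is defined by residual reachability (residual edges in $N_p\setminus\{s,t\}$ live on equality edges), prices in $S$ scale uniformly so $B'\times S$ equality edges persist, and the absence of flow from $B'$ to $G\setminus S$ follows from the backward-plus-forward residual-path argument (such a good would itself have been swept into $S$). Your writeup is more explicitly organized as an induction over the algorithm's update types, but the underlying ideas are the ones the paper uses.
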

\begin{proof}
In each phase, $S$ is initialized to all goods that can reach the selected good of highest surplus in the residual graph. The set $B'$ is always the set of buyers that have equality edges to $S$. The prices of the goods in $S$ are changed by the same factor and hence no equality edges in $B' \times S$ is destroyed. When a good is added to $S$, it has a path to $S$ in the residual graph and hence is connected to $S$ via equality edges.

Buyers in $B'$ are connected to goods in $S$ by equality edges. Assume there would be a flow from a buyer in $B'$ to a good $g$ outside $S$. Then there would be a path from $g$ to $S$ in the residual graph and hence $g$ would belong to $S$. 
\end{proof}
Cap-events occur only at a cap-event prices. A cap-event price is any price $p$ with $M_i = c_i p/u_{ij}$ for some $i$ and $j$. Let $P_c = \set{M_i u_{ij}/c_i}{1 \le i \le n, 1 \le j \le m}$. 

Let $A'$ be any subset of the edge set with positive utility such that the graph formed by it is connected. Let $B'$ and $G'$ be the buyers and goods in this connected graph. The prices in the component have only one degree of freedom, i.e., we can select one of the prices, say $p$, as a base price and express any other price in the component as $\alpha p$, where $\alpha$ is a rational whose numerator and denominator are products of at most $m$ utilities. Consider an arbitrary partition of $B'$ into capped buyers $B'_c$ and uncapped buyers $B'_u$; $B'_u$ must be nonempty. The budget of a capped buyer is of the form $c \alpha p$, where $c$ is a cap and $\alpha$ is as above. If there are no surpluses, $p$ must satisfy that (budget of capped buyers + budget of uncapped buyers) equals sum of the prices (in the component). We call a price that can be obtained in this way a submarket price; note that not all submarket prices can actually occur. Let $P_m$ be the set of \emph{submarket prices}.

Let $P_i$ be the set consisting of the initial price and zero. 
A price is \emph{1-linked} if it is of the form $(U_1/U_2)p$ where $p \in P_c \cup P_m \cup P_i$ and $U_1$ and $U_2$ are products of at most $n$
utilities each. A price is \emph{2-linked} if it of the form $(U_1/U_2) p$, where $p$ is 1-linked and $U_1$ and $U_2$ are products of at most $n$ utilities each.
\begin{lemma}\label{lem:bound}
Assuming that all budgets, happiness caps and utilities are integers bounded by $U$, 1-linked and 2-linked prices are rational numbers
whose bit-length is at most $\log{(m+n)}+3(m+n)\log{U}$. 
\end{lemma}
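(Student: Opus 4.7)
The plan is to bound the rational representation in each of the three base classes $P_c$, $P_m$, $P_i$ separately, and then track how these bounds grow under the one or two additional multipliers of the form $U_1/U_2$ with $U_1, U_2 \le U^n$ that define 1-linked and 2-linked prices. The cases $P_c$ and $P_i$ are immediate: every $p \in P_c$ is $M_i u_{ij}/c_i$, a rational with numerator at most $U^2$ and denominator at most $U$, while every nonzero $p \in P_i$ equals the integer $\sum_i M_i \le nU$. The real work is to bound the rationals in $P_m$.

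\textbf{Submarket prices.} For a submarket price $p$ on a connected component with buyer set $B' = B'_u \cup B'_c$ and good set $G'$, I first pick a spanning tree $T$ of the component that contains, for every capped buyer $i \in B'_c$, a specific MBB edge $(i, j_i)$ used to define $i$'s active budget. This is possible because the edges $\{(i, j_i) : i \in B'_c\}$ form a forest: each capped buyer has degree one in this edge set, so the subgraph is a disjoint union of stars centered at goods. Rooted at some base good, every ratio $\alpha_j = p_j/p$ telescopes along the $j$-to-root path in $T$ to $\alpha_j = N_j/D_j$, where $N_j$ and $D_j$ are products over disjoint subsets of edge-utilities in $T$ with $|N_j| = |D_j|$. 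The balance equation
\[ p \cdot \left( \sum_{j \in G'} \alpha_j \;-\; \sum_{i \in B'_c} \frac{c_i\,\alpha_{j_i}}{u_{ij_i}} \right) \;=\; \sum_{i \in B'_u} M_i \]
I then clear by multiplying through by $U_T = \prod_{e \in T} u_e \le U^{|T|} \le U^{m+n-1}$. Since $D_j$ divides $U_T$, each $U_T \alpha_j = (U_T/D_j) N_j$ is an integer with at most $|T|$ utility factors, hence at most $U^{m+n-1}$. For the capped terms, the tree choice guarantees $u_{ij_i} \notin D_{j_i}$: the path from $j_i$ to the root either starts with the edge $(i, j_i)$, placing $u_{ij_i}$ inside $N_{j_i}$, or avoids $(i, j_i)$ entirely, in which case $u_{ij_i}$ remains available as a factor of $U_T/D_{j_i}$. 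Either way $U_T c_i \alpha_{j_i}/u_{ij_i}$ is an integer bounded by $U \cdot U^{|T|-1} \le U^{m+n-1}$. Summing at most $m + n$ such integers bounds the cleared denominator by $(m+n) U^{m+n-1}$ and the cleared numerator $U_T \sum_{i \in B'_u} M_i$ by $nU \cdot U^{m+n-1} \le (m+n) U^{m+n}$, so $p$ is a rational whose numerator and denominator are each at most $(m+n) U^{m+n}$.

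\textbf{Propagation and main obstacle.} A 1-linked price is $(U_1/U_2) p$ with $U_1, U_2 \le U^n$, so its numerator and denominator are each at most $(m+n) U^{m+2n}$. A 2-linked price adds one more multiplier of the same kind, raising the bound to $(m+n) U^{m+3n}$. Because $m + 3n \le 3(m+n)$, the bit-length of any 1-linked or 2-linked price is at most $\log(m+n) + 3(m+n) \log U$, as claimed. The delicate step I expect to be the main obstacle is the spanning-tree choice for $P_m$: without the guarantee that $(i, j_i) \in T$ for every capped buyer, one would need a clearing factor such as $U_T \cdot \prod_{i \in B'_c} u_{ij_i}$ of size $U^{m+2n-1}$, which propagates to an exponent of $m + 4n$ after two multiplications and can exceed $3(m+n)$ when $n > 2m$.
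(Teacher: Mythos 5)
Your proof is correct and follows the paper's own route: handle $P_c$ and $P_i$ directly, set up the linear balance equation for $P_m$, clear the $\alpha$-denominators to read off a rational bound on $p$, and propagate through the two $U_1/U_2$ multipliers. The one place where you go beyond the paper's text is the spanning-tree bookkeeping for $P_m$: the paper states only that ``by simplifying the above equation'' one obtains that $p \in P_m$ has numerator and denominator at most $(m+n)U^{m+n}$, and your observation that the clearing factor must stay within $U^{|T|} \le U^{m+n-1}$ --- which requires choosing $T$ to contain each capped buyer's MBB edge $(i,j_i)$, or equivalently choosing $j_i$ to be $i$'s tree-neighbor (also MBB, since every equality-graph neighbor of $i$ is MBB) --- is exactly the detail that the paper's ``simplifying'' elides. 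Your diagnosis of why this care matters is also right: a naive clearing factor of $U_T \cdot \prod_{i \in B'_c} u_{ij_i}$ has exponent $m+2n-1$, which after the two additional $U^n$ multipliers gives $m+4n$, and this exceeds $3(m+n)$ once $n > 2m$. The final check $m+3n \le 3(m+n)$ and the resulting bit-length bound match the paper.
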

\begin{proof} 
The prices in sets $P_i$ and $P_c$ are clearly rational numbers with bit length at most $n\log nU$. For the set $P_m$, we have the following linear equation in $p$
\[ \sum_j \alpha_j p - \sum_i c_i \alpha_i p = \sum_i M_i,\] 
where both $\alpha_j$'s and $\alpha_i$'s are rational numbers whose numerator and denominator are product of at most $m$ utilities. By simplifying the above equation, we obtain that $p$ is a rational number whose both numerator and denominator are at most $(m+n) U^{m+n}$. This implies that the 1-linked and 2-linked prices are rational numbers whose bit lengths are at most $\log{(m+n)}+3(m+n)\log{U}$. 
\end{proof}
\begin{lemma}
\label{1-2-link-lemma}
At the beginning of a phase, all prices are 1-linked. During a phase, prices outside $S$ are 1-linked. At the end of each iteration, prices in $S$ are 2-linked. 
\end{lemma}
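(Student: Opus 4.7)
I would prove the three claims together by induction on phases, with an inner induction on iterations for (iii). The base case is immediate: at the beginning of the first phase every price equals $\sum_i M_i \in P_i$ and is therefore 1-linked with $U_1 = U_2 = 1$. Claim (ii) holds throughout any phase for free, because the algorithm rescales only prices of goods in $S$ and leaves prices outside $S$ untouched until a phase boundary.

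For claim (iii), I would not try to propagate the 2-linked representation across iterations; instead, at the end of each iteration I would rederive it afresh from the triggering event. If the iteration ends with Event 1, the scaling $x$ is determined by $xp_j = M_i u_{ij}/c_i \in P_c$ for some $j\in S$, so the new $p_j$ is 1-linked with $U_1=U_2=1$. By Lemma \ref{lem:connectFlow}, $S$ is connected in the equality graph and $|B'|\le n$; hence every other $k\in S$ admits a simple path to $j$ that uses at most $n$ buyers, exhibiting $p_k$ as $(V/V')\, xp_j$ with $V,V'$ products of at most $n$ utilities, i.e., 2-linked. If the iteration ends with Event 2, the scaling satisfies $xp_j = u_{ij}p_\ell/u_{i\ell}$ for some $j$ in the old $S$ and a good $\ell\notin$ old $S$ that is MBB for the entering buyer $i$; by (ii), $p_\ell$ is 1-linked. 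After $S$ is enlarged, every $k$ in the new $S$ still reaches $\ell$ in the equality graph along a path with at most $n$ buyers (routing through $i$ via $(i,j)$ and $(i,\ell)$ when $\ell \notin$ new $S$), giving $p_k=(V/V')p_\ell$ with $V,V'$ products of at most $n$ utilities, again 2-linked. The Event 3 case falls out of the next paragraph: we show that prices in $S$ in fact become 1-linked, which is a fortiori 2-linked.

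For claim (i) at the start of the next phase (equivalently, at the moment Event 3 concludes), I pick any $j_0\in\Gamma(T)$ as base and write each $p_k$ for $k\in\Gamma(T)$ and each capped active budget $M^a_i=c_i p_{j(i)}/u_{i,j(i)}$ in terms of $p_{j_0}$ via the 2-linked representation just established. Then the tight equation $\sum_{i\in T} M^a_i=\sum_{j\in\Gamma(T)} p_j$ becomes a linear equation in $p_{j_0}$ whose solution lies in $P_m$. The denominator is nonzero because $T$ must contain at least one uncapped buyer (otherwise both sides scale linearly in $x$ and never equalize, so Event 3 for this $T$ would never trigger). Using once more that $S$ is connected, every $k\in S$ reaches $\Gamma(T)$ via a simple path of at most $n$ buyers, so $p_k=(U_1/U_2)p_{j_0}$ with $U_1,U_2$ products of at most $n$ utilities, i.e., 1-linked with base $p_{j_0}\in P_m$. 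Combined with (ii), every price is 1-linked at the start of the next phase.

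The main obstacle is keeping the number of utility factors bounded by $n$ rather than by $O(m+n)$. Chaining 2-linked representations across iterations, or routing a path through an intermediate good whose price is itself 2-linked, would double the factor count. The proof sidesteps this by, in every case, selecting a fresh anchor whose price is directly in $P_c$, $P_m$, or outside $S$ (hence 1-linked by the induction hypothesis), and invoking the fact that any simple path in the bipartite equality graph passes through at most $n$ distinct buyers, capping the utility factors at $n$ on each side of the resulting ratio.
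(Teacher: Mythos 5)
Your proof follows the same structural plan as the paper's: induct across phases, observe that prices outside $S$ never change during a phase, establish 2-linkedness inside $S$ by re-anchoring at each event, and conclude 1-linkedness at the phase boundary from the Event 3 equation. Your observations about Events 1 and 2 are correct (and your handling of Event 1 is actually slightly weaker than needed: a cap-event price directly puts the prices in $S$ in the 1-linked class, as the paper notes, though 2-linked suffices for the lemma as stated). The routing argument through simple paths in the bipartite equality graph to bound the number of utility factors by $n$ also matches the paper.

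The gap is in your claim that \emph{``$T$ must contain at least one uncapped buyer (otherwise both sides scale linearly in $x$ and never equalize, so Event 3 for this $T$ would never trigger).''} This assertion is false in general: when every buyer in $B'$ is capped, both sides of the tightness relation do indeed scale linearly in $x$, but they do become equal at $x = 0$. This is not a hypothetical corner case --- the paper explicitly allows prices to hit zero during a phase (``When the prices of a set of goods hit zero in an iteration \ldots{} we delete these goods and the buyers to which they are allocated from consideration''), and the paper's proof of this lemma devotes a sentence to it: ``The final possibility is that all prices in $S$ are zero. Then they are also 1-linked.'' Zero is in $P_i$, so this case is trivial to close once noticed, but your argument as written excludes it incorrectly rather than handling it. The paper also separately records the subcase in which the tight-set event co-occurs with a capping event (anchoring to a cap-event price rather than a submarket price); your treatment silently absorbs that subcase by classifying the just-capped buyer as uncapped at the boundary, which is a legitimate alternative anchor, but you should say so explicitly rather than leaving the reader to reconcile the boundary classification.
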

\begin{proof}
At the beginning of the first phase, all prices are equal to the initial price and hence 1-linked. Consider any later phase. Prices of goods outside $S$ are the same as at the beginning of the phase and hence are 1-linked by induction hypothesis. Consider next the prices of the goods in $S$. The
phase ended because some set $T \subseteq B'$ became tight. If $T$ contains an uncapped buyer, the new prices of the goods in $\Gamma(T)$ are submarket prices, where $\Gamma(T)$ is the set of goods connected to $T$ in the equality graph. Since all goods in $S$ are connected to a good in $\Gamma(T)$ by equality edges, all prices in $S$ are 1-linked. If $T$ contains only uncapped buyers and the tight set event co-occurred with a capping event, one of the prices in $S$ will be a cap-event price and hence all other prices in $S$ are 1-linked. The final possibility is that all prices in $S$ are zero. Then they are also 1-linked. This proves that in the beginning all prices are 1-linked.

Now, during a phase prices outside $S$ do not change and hence are 1-linked. A tight-set event ends a phase and after the recomputation of the prices, all prices are 1-linked as shown above. When a capping event occurs, some price of a good in $S$ becomes a cap-event price and hence all prices in $S$ are 1-linked. After a new edge event, all goods in $S$ are connected to some good previously outside $S$ by equality edges. Since the price of the good outside $S$ is 1-linked, the prices in $S$ will be 2-linked.
\end{proof}
\begin{remark}
Why are we not simply stating that all prices are connected to a price in $P_c \cup P_m \cup P_i$ by a sequence of equality edges? During the course of the algorithm, we are loosing equality edges, namely edges connecting buyers in $B'$ to goods outside $S$. Since these edges do no carry flow, this does no harm. However, some of these edges may have played a role in expressing a price in terms of a reference price. Thus we cannot say that current prices are linked to reference prices through current equality edges. We can only say that current prices are linked to reference prices through a path of edges of nonzero utility. We do not know whether these paths stay simple. Lemma~\ref{1-2-link-lemma} shows that the paths are at most $2n$ in length.
\end{remark}

\begin{theorem}
\label{thm:me}
The algorithm in Figure \ref{fig:algo} computes a modest MBB equilibrium. 
\end{theorem}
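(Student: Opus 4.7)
My plan is to verify each condition of a modest MBB equilibrium for the output $(\vecx,\vecp)$, in three steps: termination, exactness of the surplus, and structural verification.

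First, I would establish termination. Lemma~\ref{lem:gd} gives $\norm{r}^2 \le (1-1/(4mn))^k\,\norm{r^0}^2$ after $k$ phases, while the initial prices $p_j=\sum_i M_i$ imply $\norm{r^0}^2 \le m(nU)^2$. So after $O(mn(\log(mn)+(m+n)\log U))$ phases the stopping condition $|r|\le\epsilon$ is met, and by Lemma~\ref{lem:numiter} each phase consists of at most $2n$ iterations.

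Second, and this is the main obstacle, I would show that $|r|\le\epsilon$ forces the recomputed balanced flow to satisfy $r_j=0$ for every $j \in G$ exactly. By Lemma~\ref{1-2-link-lemma}, at the end of each phase every price is $1$-linked, and by Lemma~\ref{lem:bound} the denominators of such prices are bounded by some $D$ with $\log D = O((m+n)\log U)$. The capacities of $N_p$ are thus rationals over $D$, so each surplus $r_j$ in the final balanced flow is a rational with denominator at most $D'$, where $D' \le (m+n) U^{4(m+n)} = 1/\epsilon$. Any positive $r_j$ would then satisfy $r_j \ge 1/D' \ge \epsilon$; combined with $|r|\le\epsilon$ this forces $r_j = 0$ for every $j$. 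This precision-versus-combinatorial-gap argument is the hard part, and it is precisely where the specific choice of $\epsilon$ in the algorithm, together with the linked-price structure of Lemma~\ref{1-2-link-lemma}, is used.

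Third, with exact zero surplus the equilibrium conditions fall out by inspection. Nonnegativity of prices is part of the algorithm's invariants. Either $p_j=0$, in which case Walras' law is trivial and the good has been removed from consideration with a modest allocation already in place, or $p_j>0$ and $r_j=0$ yields $f_{jt}=p_j$, so $\sum_i x_{ij}=1$, giving both no-overallocation and Walras' law. The invariant that edges out of $s$ are saturated, together with the equality-edge construction of $N_p$, means each buyer $i$ spends exactly her active budget $M^a_i=\min\{M_i,c_i/\alpha_i\}$ along MBB goods, so the allocation is thrifty. Its utility is $\alpha_i M^a_i=\min\{\alpha_i M_i,c_i\}$, which is the maximum utility attainable within budget $M_i$ at prices $\vecp$ and is bounded above by $c_i$: this certifies that $\vecx_i$ is a demand bundle and that the bundle is modest, in both the capped and uncapped cases. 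Hence $(\vecx,\vecp)$ satisfies every property of a modest MBB equilibrium.
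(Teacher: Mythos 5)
Your proof is correct and follows essentially the same strategy as the paper's: use the linked-price structure (Lemmas~\ref{1-2-link-lemma} and~\ref{lem:bound}) together with the choice of $\epsilon$ to force the final surplus to be exactly zero, then read off the equilibrium conditions from the invariants of the algorithm. The one place you diverge is the quantity to which the precision-versus-gap argument is applied. The paper applies it to the \emph{per-component total} surplus $\sum_{j\in C}r_j$, which by saturation of the edges out of $s$ equals $\sum_{j\in C}p_j-\sum_{i\in C}M^a_i$; this is directly an affine expression in the single reference price of the component, so its bounded denominator is immediate from Lemma~\ref{lem:bound} without any appeal to the structure of the balanced flow, and since each $r_j\ge 0$ the vanishing of the component total forces each individual $r_j=0$. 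You instead bound the denominators of the \emph{individual} $r_j$ of the final balanced flow. This route also works, but the step ``the capacities of $N_p$ are rationals over $D$, so each surplus $r_j$ is a rational with denominator at most $D'$'' needs a word of justification: a balanced-flow surplus on a uniform-surplus class $T$ has the form $\bigl(\sum_{j\in T}p_j-\sum_i M^a_i\bigr)/|T|$, which picks up an extra factor $|T|\le m$, and one must argue (as the paper implicitly does via the reference-price representation) that the prices and active budgets within a component share a common bounded denominator, rather than merely having individually bounded denominators whose least common multiple could blow up. These bookkeeping details are absorbed by the slack in your bound $D'\le(m+n)U^{4(m+n)}$ versus the $U^{3(m+n)}$ of Lemma~\ref{lem:bound}, so the conclusion stands. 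Your third step, verifying the equilibrium conditions, is more explicit than the paper's one-sentence dismissal, and your termination discussion is correct though it belongs logically to Theorem~\ref{thm:rt} rather than this statement.
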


\begin{proof} 
When the algorithm terminates, we claim that at this stage total surplus $\sum_j r_j = 0$. This will imply that the algorithm in Figure~\ref{fig:algo}
computes a market equilibrium. Consider any good $j$ and the component $C$ of the equality graph containing good $j$. The total surplus $\sum_{j \in C} r_j$ in the component is  $\sum_{j \in C} p_j - \sum_{i\in C} M^a_i$. This is non-negative and less than $\epsilon$. All prices and active budgets of capped buyers can be expressed in terms of one price variable $p$ using equality relations. By Lemma~\ref{lem:bound}, $p$ is a rational number with bounded denominator, and the above inequalities imply $\sum_{j \in C} r_j = 0$. Thus $r_j = 0$ for all $j$. 
%
\end{proof} 
Let $x_{ts}$ denote the value of $x$ when Event 3 occurs in the algorithm. Next we show that $x_{ts}$ can be computed using at most $n$ max-flow computations. This is a generalization of a procedure in~\cite{DevanurPSV08} for computing tight set in case of {\em linear} Fisher markets. 

\begin{lemma}
\label{lem:ts}
$x_{ts}$ can be computed using at most $n$ max-flow computation. 
\end{lemma}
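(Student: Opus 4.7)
The plan is to generalize the parametric tight-set subroutine of~\cite{DevanurPSV08} from linear Fisher markets to the present setting, whose new feature is the coexistence of capped buyers $B'_c$ and uncapped buyers $B'_u$ within $B'$. Only the $(s,i)$-capacities for $i\in B'_c$ and the $(j,t)$-capacities for $j\in S$ scale with $x$, while the $(s,i)$-capacities for $i\in B'_u$ and the $(j,t)$-capacities for $j\in G\setminus S$ stay fixed.

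I would first observe that since prices of $S$ strictly decrease while those of $G\setminus S$ remain constant, any equality edge from a buyer in $B'$ to a good in $G\setminus S$ is destroyed the moment $x<1$, so $\Gamma(T)\subseteq S$ for every $T\subseteq B'$ considered during the iteration. Writing $T_c=T\cap B'_c$ and $T_u=T\cap B'_u$, the tight-set condition $\sum_{i\in T}M^a_i(x)=\sum_{j\in\Gamma(T)}p_j(x)$ reduces to the single linear equation $\sum_{i\in T_u}M_i+x\sum_{i\in T_c}M^a_i=x\sum_{j\in\Gamma(T)}p_j$, whose unique root (for $T_u\neq\emptyset$) is
\[
x_T \;=\; \frac{\sum_{i\in T_u}M_i}{\sum_{j\in\Gamma(T)}p_j-\sum_{i\in T_c}M^a_i}.
\]
(If $T_u=\emptyset$, both sides scale identically in $x$ and $T$ cannot newly become tight.) The denominator is positive by the saturation invariant at $x=1$. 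A direct computation then factorizes the max-flow saturation deficit of an arbitrary $R\subseteq B'$ at parameter $x$ as
\[
d_R(x) \;=\; \Bigl(\sum_{j\in\Gamma(R)}p_j-\sum_{i\in R_c}M^a_i\Bigr)(x_R-x),
\]
so $R$ is strictly over-tight at $x$ iff $x_R>x$, and hence $x_{ts}=\max_{R\subseteq B'}x_R$.

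The procedure I propose maintains a candidate $T\subseteq B'$, initialized to $T\leftarrow B'$, and repeats: (i) compute $x_T$ from the closed form; (ii) run a single max-flow on the subnetwork on $\{s,t\}\cup T\cup\Gamma(T)$ with capacities evaluated at parameter $x_T$; (iii) if every $(s,i)$ with $i\in T$ is saturated by this flow, output $x_T$ as $x_{ts}$; (iv) otherwise replace $T$ by the source-side of the min-cut (the maximal max-deficient subset of $T$) and iterate. Because $|T|$ strictly shrinks at each iteration, the loop terminates after at most $n$ iterations and uses at most $n$ max-flow computations.

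Correctness rests on the invariant that every $R\subseteq B'$ with $x_R>x_T$ is a subset of the current $T$. This is trivial at initialization, and its preservation in step~(iv) follows from the supermodular inequality $d_{R\cup T'}(x_T)+d_{R\cap T'}(x_T)\ge d_R(x_T)+d_{T'}(x_T)$ for flow deficits combined with the maximality of $T'$, via the classical absorption argument adapted from~\cite{DevanurPSV08}. At termination, saturation on $T$ forces $d_R(x_T)\le 0$, i.e.\ $x_R\le x_T$, for every $R\subseteq T$, and the invariant then lifts this bound to all of $B'$, giving $x_{ts}=x_T$. The main obstacle I anticipate is exactly this absorption step: unlike the uniform scaling in~\cite{DevanurPSV08}, here $d_R(x)$ blends a fixed contribution from $i\in B'_u$ with $x$-scaling contributions from $i\in R_c$ and $j\in\Gamma(R)$, so one must carefully separate these two kinds of contributions in the supermodularity calculation before the classical argument extends.
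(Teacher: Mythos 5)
Your proposal is correct and follows essentially the same approach as the paper: pick an initial candidate $T=B'$, compute the value $x_T$ at which it would become tight via a closed-form linear equation, run a single max-flow at that parameter, and if not all source edges saturate, recurse on the source side $B_1$ of the min-cut, terminating in at most $n$ steps because $B_2\neq\emptyset$. The paper states the procedure more tersely and defers the correctness argument to the analogous routine of Devanur et al.\ for linear markets, whereas you explicitly work out the closed form for $x_T$, the factorization $d_R(x)=\bigl(\sum_{j\in\Gamma(R)}p_j-\sum_{i\in R_c}M^a_i\bigr)(x_R-x)$, and the supermodularity-based invariant that every over-tight set stays inside the shrinking candidate $T$ --- all of which are the right technical fillers for the generalization you flagged (mixed fixed and $x$-scaling capacities). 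No gap.
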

\begin{proof}
Let $C$ be the connected component of the equality graph containing the goods $S$, and consider buyers $B'$ as defined in Figure~\ref{fig:algo}. Let $P$ be the total price of the goods in $S$ (when $x = 1)$. The active budget of the buyers in $B'$ is $U + xV$. $B'$ is not tight at $x = 1$. It goes tight at $x$ determined by $U + xV = Px$, i.e., $x = U/(P - V)$. Let us set the prices of the goods in $S$ to $x \vecp$, where $\vecp$ is the price vector. Also set the active budgets as determined by $x$. Compute a max-flow. 

If all of the budget can be routed, then $B'$ is the smallest tight set. Otherwise, let $(s \cup B_1 \cup G_1, B_2 \cup G_2 \cup t)$ be the minimum cut. At the price vector $xp$, the buyers in $B_2$ can still get rid of their entire budget. Hence the tightest set is contained in $B_1$. We recurse on $s \cup B_1 \cup G_1 \cup t$. This clearly requires at most $n$ max flow computations. 
\end{proof}

\begin{remark}
Whenever $C$ is enlarged, the tightest set must be recomputed. The critical $x$ may decrease. It cannot increase as the buyers in the old $C$ are not acquiring new equality edges. 
\end{remark}

{\noindent \textbf{Maximum Revenue.} $\;$}
Finally, we show that our algorithm gives a modest MBB equilibrium with maximum revenue among all modest MBB equilibria.
\begin{lemma}
	\label{lem:maxPrice}
  Consider the price vector $\vecp$ at the end of any phase of the algorithm. We have $\vecp \ge \vecp'$ for any price vector $\vecp' \in \calP$ of a modest MBB equilibrium.
\end{lemma}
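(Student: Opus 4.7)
The plan is to prove the lemma by induction on phases of the algorithm. The base case holds at initialization, where $p_j = \sum_i M_i$ dominates any equilibrium price $p'_j$ since total equilibrium revenue is bounded by $\sum_i M^{a,\prime}_i \le \sum_i M_i$. For the inductive step, I fix an arbitrary modest MBB equilibrium $(\vecx', \vecp')$, assume $\vecp \ge \vecp'$ at the start of a phase, and aim to show the same at the end. Suppose not: since prices outside $S$ do not change within a phase, some good $j \in S$ must have $p_j < p'_j$ at the end. Because prices vary continuously in the scaling parameter $x$, there is a well-defined critical moment $t^*$ in the phase at which some price first reaches its equilibrium value; at $t^*$ the inequality $\vecp \ge \vecp'$ still holds, with equality exactly on the set $S^{=} := \{k \in S : p_k = p'_k\}$.

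The core claim is that the set $T^* := \{i \in B : x'_{ij} > 0 \text{ for some } j \in S^{=}\}$ of buyers purchasing from $S^{=}$ in the equilibrium is a tight set at $t^*$. Since Event 3 halts the phase the moment a tight set appears, this will prevent any $p_k$ from being decreased strictly below $p'_k$, yielding the desired contradiction. To prove the tight-set claim I would proceed in three steps. First, for every $i \in T^*$ the current MBB ratio equals the equilibrium one: buyer $i$ has some $k \in S^{=}$ with $u_{ik}/p'_k = \alpha'_i$, and since $p_k = p'_k$ while $p_l \ge p'_l$ elsewhere, $u_{ik}/p_k = \alpha'_i$ remains the pointwise maximum, so $\alpha_i = \alpha'_i$ and hence $M^a_i = M^{a,\prime}_i$. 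Second, the only current equality edges of $T^*$ go into $S^{=}$: strict inequality on $S \setminus S^{=}$ rules out those goods, and the algorithm's invariants (see below) rule out goods outside $S$. Third, since in the equilibrium $S^{=}$ is purchased only by $T^*$, their equilibrium spending on $S^{=}$ is exactly $\sum_{k \in S^{=}} p'_k = \sum_{k \in S^{=}} p_k$, which is at most their total active budget $\sum_{i \in T^*} M^{a,\prime}_i = \sum_{i \in T^*} M^a_i$; the saturation invariant combined with step two gives the reverse inequality, so equality holds and $T^*$ is tight.

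The main obstacle is step two: arguing that during a phase, no buyer in $B'$ can possess an equality edge to a good outside $S$ at the critical moment $t^*$. The plan is to use that as soon as $x < 1$, the ratio $\alpha_i$ for each $i \in B'$ strictly increases, since $i$ has an MBB good in $S$ whose price scales down with $x$, so any old equality from $i$ to some $l \notin S$ is instantly broken because $u_{il}/p_l$ remains fixed. New equality edges can only appear via Event 2, and the residual-network reasoning underlying Lemma~\ref{lem:connectFlow} shows that any new edge from a buyer in $B'$ to a good outside $S$ is impossible, since buyers in $B'$ carry no flow outside $S$. Once this is established, the tight-set conclusion forces Event 3 to end the phase at or before $t^*$, contradicting the assumption $p_j < p'_j$ at the end of the phase and completing the induction.
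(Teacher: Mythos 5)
Your proof proposal takes a genuinely different route from the paper's. The paper does \emph{not} argue by induction over phases; instead it fixes the state at the end of an arbitrary phase, sets $T = \{j : p_j < p'_j\}$, and splits into two cases. For goods with positive surplus it uses a direct money-balance argument (the same accounting as in the proof of Lemma~\ref{lem:lattice}): it shows that if some $j\in T$ had $r_j>0$, then the active budgets of $\Gamma(T,\vecp')$ could not clear the goods of $T$ at prices $\vecp'$, contradicting that $\vecp'$ is an equilibrium. For goods with zero surplus it uses a history argument: it looks back to the last phase at the end of which $r_j>0$ and notes that $p_j$ is never touched again. Your proof instead exploits the operational semantics of the algorithm directly, showing that at the first instant $t^*$ at which a price would touch its equilibrium value, the set $T^*$ of equilibrium purchasers of $S^{=}$ is tight, so Event~3 must have fired. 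The key sub-steps, that $\alpha_i = \alpha'_i$ and hence $M^a_i = M^{a,\prime}_i$ at $t^*$, that $\Gamma(T^*)\subseteq S^{=}$, and the two-sided inequality $\sum_{i\in T^*}M^a_i \le \sum_{k\in S^{=}}p_k \le \sum_{i\in T^*}M^{a,\prime}_i$, are all correct and pleasantly direct. What your argument buys is a single, uniform contradiction rather than the paper's two-case split; what the paper's argument buys is that it avoids any reasoning about the continuous dynamics inside an iteration.

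There is one spot you should shore up. Your Step~2 hinges on the claim that at $t^*$ the equality edges of $T^*$ lie entirely inside $S$, which you justify by the observation that old equality edges from $B'$ to $G\setminus S$ are broken ``as soon as $x<1$.'' This argument only applies when $t^*$ lies strictly in the interior of an iteration. If $t^*$ coincides with the start of a phase or of an iteration (i.e.\ $x=1$), for instance because the inductive hypothesis $\vecp\ge\vecp'$ is already tight for some good that is placed into $S$, then a buyer in $B'$ may at that instant still have an equality edge to a good outside $S$, so $\Gamma(T^*)\not\subseteq S^{=}$ and the tightness computation collapses. One also has to be slightly careful about goods added to $S$ via Event~2 that enter with $p_k=p'_k$ unchanged from the start of the phase. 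To close this, you either need a separate argument that this boundary configuration cannot arise (note that at the start of a phase no subset of $B'$ is tight, because the goods of $S$ all carry strictly positive surplus $\delta$, which constrains $T^*$ but does not immediately resolve the equality-edge issue), or you should fall back to the paper's style of argument for the boundary case. The rest of your reasoning is sound and makes a nice alternative to the proof in the paper.
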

\begin{proof}
Let $T$ be the set of goods whose price strictly increases from $\vecp$ to $\vecp'$. We will prove $T = \emptyset$ in two steps:\\

\noindent{\bf Step 1.} First we show that for any good $j$ with $r_j>0$, we have $j \notin T$. Suppose by contradiction that this is not true. Let $f, x$ and $r$ be the balanced flow, allocation and surplus vector with respect to prices $\vecp$, respectively. Let $\Gamma(T, \vecp)$ and $\Gamma(T, \vecp')$ be the set of buyers connected to $T$ in the equality graph with regard to $\vecp$ and $\vecp'$, respectively. 

    We observe two facts: 
    \begin{itemize}
    \item [(1)] $\Gamma(T, \vecp') \subseteq \Gamma(T, \vecp)$.
    \item [(2)] $x_{ij} = 0$ for $i \in \Gamma(T, \vecp')$ and $j \notin T$. 
    \end{itemize}

 Now we analyze the total money spent on goods in set $T$ from buyers in $\Gamma(T, \vecp')$, with respect to prices $\vecp$ and $\vecp'$ respectively. 
    
Because $r$ is a non-negative vector with at least one positive entry in $T$, we have 
    \begin{equation}
      \label{eq:p1}
      \sum_{i \in \Gamma(T, \vecp')}M^a_i = \sum_{i \in \Gamma(T, \vecp')}\sum_{j \in T}x_{ij}p_j < \sum_{j \in T}p_j.
    \end{equation}

    For every $i \in \Gamma(T, \vecp')$, if $i$ is uncapped with $\vecp$, he remains uncapped with $\vecp'$, and his active budget will remain unchanged. If $i$ is capped with $\vecp$, his active budget with prices $\vecp$ is $\sum_{j\in T}{x_{ij}p_j}$, and his new active budget with prices $\vecp'$ will be no more than $\sum_{j\in T}{x_{ij}p'_j}$. Hence the total increase of all active budgets for buyers in $\Gamma(T, \vecp')$ from $\vecp$ to $\vecp'$ will be no more than 
    \begin{equation}
      \label{eq:p2}
      \sum_{i \in \Gamma(T, \vecp')}\sum_{j\in T}{(x_{ij}p'_j - x_{ij}p_j)} \leq \sum_{j \in T}p'_j - \sum_{j \in T}p_j
    \end{equation}

    Combine inequality~(\ref{eq:p2}) with~(\ref{eq:p1}) indicates that the total active budgets of buyers in $\Gamma(T, \vecp')$ is strictly less than $\sum_{j \in T}p'_j$, and it is not enough to clear all the surpluses of goods in $T$. This contradicts the fact that $\vecp'$ is an equilibrium.\\
  
\noindent{\bf Step 2.} Next we show for any good $j$ with $r_j = 0$, $j$ cannot be in $T$ either. Let $k$ be the last phase at the end of which has $r_j > 0$. Notice that in phase $k+1$, we decrease the price of good $j$ continuously, and the first moment when $r_j$ reaches zero marks the end of this phase. Thus at the end of phase $k+1$, we will also have $p_j \geq p'_j$. In the remaining steps of the algorithm the price of good $j$ is never touched again. This means $p_j \geq p'_j$ will always hold.
\end{proof}
For the main result in this section, assume that all budgets, happiness caps and utilities are integers bounded by $U$. 

\begin{theorem}
\label{thm:rt}
The algorithm in Figure~\ref{fig:algo} computes a modest MBB equilibrium with maximum revenue in $O(mn^6(\log(m+n) + (m+n)\log U))$ time. 
\end{theorem}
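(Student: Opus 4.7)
The correctness and equilibrium properties fall out of what has already been proved in this section: Theorem~\ref{thm:me} ensures the algorithm terminates with a modest MBB equilibrium, and Lemma~\ref{lem:maxPrice} shows that the price vector produced at the end of every phase dominates pointwise the price vector of any modest MBB equilibrium. Since in any equilibrium the revenue equals the total money spent by buyers, which under market clearing is the sum of prices of fully sold goods, pointwise maximality of $\vecp$ immediately yields maximum revenue among modest MBB equilibria. So the only non-trivial work is the running-time bound, which I plan to obtain as the product of four factors: (number of phases) $\times$ (iterations per phase) $\times$ (max-flows per iteration) $\times$ (cost of one max-flow).

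The first factor is the most subtle. By Lemma~\ref{lem:gd}, $\|r\|^2$ shrinks by $(1-\nfrac{1}{4mn})$ per phase, starting from $\|r_0\|^2 \le m(nU)^2$ because initial prices are $p_j = \sum_i M_i \le nU$. Termination requires $|r| \le \epsilon = 1/((m+n)U^{4(m+n)})$. I would first argue that once this threshold is reached, in fact $r = \0$ exactly: by Lemma~\ref{1-2-link-lemma} all prices are 1-linked, and by Lemma~\ref{lem:bound} the surplus of any connected component of the equality graph, when expressed through its single base price, is a rational with numerator and denominator bounded by $(m+n)U^{O(m+n)}$, so it is either zero or at least $\epsilon$. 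Combined with the geometric decrease, this gives at most $O(mn \cdot (\log(m+n) + (m+n)\log U))$ phases.

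The remaining factors are easier. By Lemma~\ref{lem:numiter} each phase has at most $2n$ iterations. Within an iteration, the dominant work occurs when Event~2 triggers a balanced-flow recomputation and when Event~3 triggers the tight-set computation; a balanced flow in $N_p$ is obtained by $O(n)$ max-flow calls along the lines of~\cite{DevanurPSV08}, and by Lemma~\ref{lem:ts} the tight-set detection also uses $O(n)$ max-flows. Each max-flow runs on a graph with $O(m+n)$ nodes and, by Lemma~\ref{lem:bound}, rational capacities of bit-length $O(\log(m+n) + (m+n)\log U)$, so it executes in polynomial time (we may use the standard cubic bound). Multiplying the factors and absorbing lower-order terms yields the stated $O(mn^6(\log(m+n) + (m+n)\log U))$.

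The main obstacle I anticipate is the bit-length/termination step: one must tie the geometric decrease of $\|r\|^2$ to the algebraic fact that a nonzero component surplus is bounded below by $\epsilon$, using the 1-linked structure invariant of Lemma~\ref{1-2-link-lemma} together with the bound on submarket/cap-event prices. The rest of the analysis is essentially bookkeeping once this step is in place, and in particular the algorithm never encounters numbers of super-polynomial size because both phase boundaries and within-phase events preserve the 1- or 2-linked form.
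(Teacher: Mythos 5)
Your proposal is correct and follows essentially the same route as the paper's proof: the maximum-revenue property from Lemma~\ref{lem:maxPrice}, termination and exactness from Theorem~\ref{thm:me}, the phase count from the geometric decrease in Lemma~\ref{lem:gd} combined with the bit-length lower bound on nonzero surpluses via Lemmas~\ref{1-2-link-lemma} and~\ref{lem:bound}, then $O(n)$ iterations per phase (Lemma~\ref{lem:numiter}), $O(n)$ max-flows per iteration for balanced-flow recomputation and tight-set detection (Lemma~\ref{lem:ts}), and $O(n^3)$ per max-flow. The paper's proof is more terse but decomposes the running time in exactly this way and multiplies the same four factors to reach the stated bound.
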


\begin{proof} 
In the beginning, the 2-norm of surplus vector $r$ satisfies $\norm{r}^2 \le mn^2U^2$. By Theorem~\ref{thm:me}, the algorithm will terminate before the norm becomes $\norm{r'}^2 = 1/m(m+n)^2U^{8(m+n)}$. Let $k$ denote the number of phases when the surplus becomes $r'$. From Lemma~\ref{lem:gd}, we have $\norm{r}^2(1-1/4mn)^k = \norm{r'}^2$, which implies that the total number of phases in the algorithm is $O(mn(\log(m+n) + (m+n)\log U))$.  

In each phase, we have at most $2n$ iterations, and in each iteration we need to compute the maximum $0\le x\le 1$ when one of the three events occurs. Let $x_c, x_{eq}$ and $x_{ts}$ respectively denote the maximum value of $x$ where Event 1, 2 and 3 occurs. Clearly, $x_c$ can be obtained in $O(n)$ time, $x_{eq}$ can be obtained in $O(mn)$ time, and $x_{ts}$ can be obtained using at most $n$ max-flow computations due to Lemma~\ref{lem:ts}. Further, we recompute a balanced flow in case of Event 2 which further requires at most $n$ max-flow computations~\cite{DevanurPSV08}. Since a max-flow can be obtained in $O(n^3)$ time, each iteration can be implemented in $O(n^4)$ time.  Hence, the total running time of the algorithm is $O(mn^6(\log(m+n) + (m+n)\log U))$. 
\end{proof} 

We conjecture that the running time in Theorem~\ref{thm:rt} can be reduced by a factor of $\tilde{O}(n^2)$ using the perturbation technique from~\cite{DuanGM16} which requires a max-flow to be computed only in a network with forest structure. We have not worked out the details.

%

\section{Computing a Modest MBB Equilibrium with Minimum Revenue}
\label{sec:minPrice}

In this section, we show how to transform in polynomial time any modest MBB equilibrium into one with minimum revenue using the postprocessing procedure in Fig.~\ref{fig:postprocessing}.
\begin{figure*}[t]
\centerline{\framebox{\parbox{5.0in}{
\begin{tabbing}
555\=555\=555\=555\=\kill
\>{\bf Input:} A market with a set of buyers $B$ and a set of goods $G$; \\
\>\>\> Budget $M_i$, happiness cap $c_i$, and utility parameters $u_{ij},\ \forall i\in B, j\in G$; \\
\>\>\> Any modest MBB equilibrium $(\vecx, \vecp)$;\\
\>{\bf Output:} A modest MBB equilibrium $(\vecx,\vecp)$ with minimum revenue;\\ [0.3em]
\>Initialize active budget $M_i' \ot \min\{M_i, \min_j c_ip_j/u_{ij}\}$ for each buyer $i$; \\[0.3em] 
\>$S \ot \{j | \textrm{$p_j>0$ and $j$ does not have incident equality edges to any uncapped buyer}\}$; \\[0.3em]
\>{\bf While}\ $S \neq \emptyset$\\[0.3em]
\>\> $B' \ot $ Set of buyers who have incident equality edges to $S$; \\
\>\> $x\ot 1$; Define prices and active budgets as follows: \\
\>\>\> $p_j \ot xp_j,\  \forall j \in S$;\ $M^a_i \ot xM^a_i,\ \forall i \in B_c'$; \\ 
\>\> Decrease $x$ continuously down from 1 until one of the following events occurs\\[0.3em]
\>\> {\bf Event 1:} $x$ becomes zero; \\ 
\>\> {\bf Event 2:} A new equality edge appears \\ 
\>\> Recompute $N_p$ and $S$;\\ 
\> {\bf EndWhile}
\end{tabbing}\vspace{-1em}
}}}\caption{The postprocessing algorithm for an equilibrium with minimum revenue}\label{fig:postprocessing}
\end{figure*}
\begin{theorem} 
The algorithm in Figure~\ref{fig:postprocessing} computes a modest MBB equilibrium with minimum revenue.
\end{theorem}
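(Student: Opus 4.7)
The plan is to establish three claims in sequence: first, that the algorithm maintains a modest MBB equilibrium as an invariant at every moment; second, that it terminates in finite time; and third, that the final price vector is coordinate-wise smallest among all modest MBB equilibrium prices, which by the Corollary to Theorem~\ref{thm:lattice} yields minimum revenue. The real work lies in the third claim.

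For the invariance, I would first observe that by the construction of $S$ (positive-priced goods with no equality edge to any uncapped buyer), every buyer in $B'$ is capped, i.e., $B'=B'_c$. Scaling prices of $S$ by $x\le 1$ and the active budgets of $B'_c$ by the same $x$ preserves all equality edges within $B'\times S$, keeps the physical allocation $\vecx$ unchanged (and hence feasible), keeps every $i\in B'$ modest and capped with utility $c_i$ (its MBB ratio scales to $\alpha_i/x$ uniformly over $S$, and $i$ still buys only from $S$), and preserves Walras's law; buyers outside $B'$ are untouched. When Event~2 creates a new equality edge from some $i\notin B'$ to $j\in S$, $i$'s current bundle remains a demand bundle because $i$'s MBB ratio is unchanged; when Event~1 drives prices in $S$ to zero, Walras's law holds trivially. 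For finite termination, Event~1 is irreversible (prices never rise) and can happen at most $m$ times, while the number of Event~2 iterations can be bounded by a potential argument on equality edges in the spirit of Lemma~\ref{lem:numiter}.

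The main obstacle is showing the terminating $\vecp$ satisfies $\vecp\le \vecp'$ for every modest MBB equilibrium price vector $\vecp'$. I would argue by contradiction: suppose $T=\{j:p_j>p'_j\}$ is nonempty. Applying Lemma~\ref{lem:lattice} with $S_2=T$ gives $\Gamma(T,\vecp)=\Gamma(T,\vecp')$ and, moreover, every buyer in this common set is capped in both equilibria. Pick any $j^*\in T$; since $p_{j^*}>0$ but the loop has terminated with $S=\emptyset$, $j^*$ must have an equality edge to some \emph{uncapped} buyer $i^*$ in $\vecp$, so $u_{i^*j^*}/p_{j^*}=\alpha_{i^*}$. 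This forces $u_{i^*j^*}/p'_{j^*}>\alpha_{i^*}$ and hence $\alpha'_{i^*}>\alpha_{i^*}$ in $\vecp'$.

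To close the contradiction I would invoke Proposition~\ref{prop:EGisMBB}, which says that utilities are unique across all modest MBB equilibria. Since $i^*$ is uncapped in $\vecp$ with $u_{i^*}(\vecx_{i^*})=\alpha_{i^*}M_{i^*}<c_{i^*}$, uniqueness forces $u_{i^*}(\vecx'_{i^*})=\alpha_{i^*}M_{i^*}$. But if $i^*$ is uncapped in $\vecp'$ then its utility must equal $\alpha'_{i^*}M_{i^*}>\alpha_{i^*}M_{i^*}$, and if $i^*$ is capped in $\vecp'$ then its utility is $c_{i^*}>\alpha_{i^*}M_{i^*}$; both cases contradict uniqueness. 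Hence $T=\emptyset$, the output has pointwise smallest prices, and therefore yields minimum revenue. I expect this last step to be the main obstacle precisely because it requires simultaneously exploiting the termination condition $S=\emptyset$, the lattice structure from Lemma~\ref{lem:lattice}, and the utility-uniqueness Proposition~\ref{prop:EGisMBB}.
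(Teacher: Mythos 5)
Your proof is correct, but for the decisive step --- ruling out $T=\{j:p_j>p'_j\}\neq\emptyset$ at termination --- you take a genuinely different route. The paper shows that every buyer with an equality edge to a good in $T$ under $\vecp$ must lie in $\Gamma(T,\vecp')=\Gamma(T,\vecp)$, which by part~$(iii)$ of Lemma~\ref{lem:lattice} consists only of capped buyers, so $T\subseteq S$ and the loop could not have halted. You instead use only the termination condition $S=\emptyset$ to extract an \emph{uncapped} buyer $i^*$ with an equality edge to some $j^*\in T$, observe that $p'_{j^*}<p_{j^*}$ forces $\alpha'_{i^*}>\alpha_{i^*}$, and close with the utility-uniqueness part of Proposition~\ref{prop:EGisMBB}: the utility of $i^*$ would strictly increase whether it stays uncapped or becomes capped at $\vecp'$. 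Your initial citation of Lemma~\ref{lem:lattice} is not load-bearing --- you never use $\Gamma(T,\vecp)=\Gamma(T,\vecp')$ or its cappedness, since $i^*$ is located via the equality graph rather than via the support of the allocation. The paper's route keeps the lattice machinery central and never invokes Proposition~\ref{prop:EGisMBB}; yours is shorter once utility uniqueness is available and makes transparent exactly which feature of the termination condition blocks any further price decrease. One thing to tighten in your invariance argument: that each $i\in B'$ ``still buys only from $S$'' does not follow merely from the definition of $S$ in Figure~\ref{fig:postprocessing} --- a capped buyer in $B'$ may carry flow to a good outside $S$ that also has an uncapped equality neighbor, and scaling the $S$-prices alone would then destroy MBB-ness of that flow. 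The paper's ``it is easy to check'' glosses over the same point; a careful treatment reroutes such flow in the spirit of Lemma~\ref{lem:connectFlow}, or recomputes a balanced flow whenever $S$ is reconstructed.
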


\begin{proof}
  It is easy to check that throughout the algorithm, $(\vecx, \vecp)$ always remains a modest MBB equilibrium. Assume by contradiction that at the end of the algorithm, $(\vecx, \vecp)$ is not an equilibrium with smallest prices. Let $(\vecx', \vecp')$ be an equilibrium with smallest prices, and define $S_1 = \{j \mid p_j > p'_j\}$. By Lemma~\ref{lem:lattice} property (3), all buyers in $\Gamma(S_1, \vecp)$ are capped buyers. Because prices of goods in set $S_1$ decreases from $\vecp$ to $\vecp'$, every buyer $i$ incident to $S_1$ in the equality graph with prices $\vecp$ will only have equality edges to $S_1$ with prices $\vecp'$. Therefore we have $i \in \Gamma(S_1, \vecp') = \Gamma(S_1, \vecp)$ (the equality is again by Lemma~\ref{lem:lattice}). This implies $\Gamma(S_1, \vecp)$ is also the set of buyers who have incident equality edges to $S_1$ with prices $\vecp$. Hence, set $S$ is nonempty for the While loop and the algorithm should not terminate.
\end{proof}




\section{Extensions}
\label{sec:extend}

In the previous section, we proposed an algorithm for computing a modest MBB equilibrium, which has a Pareto-optimal allocation. When we depart from the set of such equilibria, then utilities in market equilibrium are not uniquely determined. In fact, we show that market equilibria with maximum social welfare might not be modest MBB equilibria, and computing such optimal equilibria becomes \classNP-hard.

\begin{theorem}
	\label{thm:NPC}
	It is \classNP-hard to compute a market equilibrium that maximizes social welfare.
\end{theorem}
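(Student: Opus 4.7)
The plan is to reduce from a strongly NP-hard problem such as EXACT COVER BY 3-SETS (X3C) or 3-DIMENSIONAL MATCHING, rather than a weakly NP-hard one like SUBSET SUM, because budget-additive Fisher markets admit fractional allocations that would defeat SUBSET-SUM-style reductions: splitting each element good equally between two ``decision'' buyers trivially meets any linear target sum without actually solving the underlying combinatorial problem. A reduction from a problem whose natural LP relaxation is not integral is needed.

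Given an X3C instance with universe $U$ of size $3q$ and a family $\mathcal{F}$ of $3$-subsets, I would construct a market with an ``element good'' $g_u$ for every $u \in U$, a ``set buyer'' $b_F$ for every $F \in \mathcal{F}$, and a small number of auxiliary goods and buyers used to rigidify prices. Each $b_F$ has equal positive utility on the three goods $\{g_u \colon u \in F\}$ and zero utility elsewhere; the budget $M_F$ and cap $c_F$ are chosen so that $b_F$ attains her cap exactly when she receives all three of her elements in full. Budgets and caps of auxiliary buyers would be tuned so that their welfare contribution is the same across all equilibria, isolating the combinatorial content of the objective in the set-buyer terms.

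For the welfare analysis, I would argue as follows. \emph{(Yes case)} Given an exact cover $\mathcal{F}^* \subseteq \mathcal{F}$, allocate each $g_u$ fully to the unique $F \in \mathcal{F}^*$ covering it; the $q$ ``activated'' set buyers reach their caps and all others receive nothing. One verifies this is a market equilibrium by exhibiting supporting prices (for instance, MBB-balanced prices on the activated bundles, with large enough prices on element goods to deter the inactive set buyers from violating equilibrium). \emph{(No case)} If no exact cover exists, then in any equilibrium either some $g_u$ is fractionally shared among at least two set buyers, or some activated set buyer gets less than a full unit of one of her elements; either way, at least one set buyer strictly misses her cap, yielding a welfare gap of at least a constant between the two cases.

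The main obstacle is designing the \emph{integrality gadget} that precludes cheating fractional equilibria. Because Theorem~\ref{thm:NPC} concerns the maximum welfare over \emph{all} market equilibria, not just modest MBB ones whose utility vector is unique by Proposition~\ref{prop:EGisMBB}, the space of feasible allocations is large: capped buyers may be non-thrifty, non-modest, and supported by a wide range of price vectors. The gadget must guarantee that any equilibrium allocation of element goods is essentially integral, which I would enforce by pairing each $g_u$ with competing auxiliary buyers whose MBB preferences pin $p_u$ into a narrow range and force either full allocation to one $b_F$ or a strict welfare penalty. Calibrating parameters so that both the yes case admits the intended supporting prices and the no case admits no high-welfare equilibrium is the delicate quantitative step of the proof.
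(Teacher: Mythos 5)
Your high-level plan shares the paper's starting point (a reduction from a strongly NP-hard covering problem, and the recognition that fractional ``cheating'' allocations are the central danger), but your construction differs in a way that introduces a genuine gap, and the paper resolves the integrality issue by a mechanism quite unlike the price-pinning gadget you sketch.

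The concrete flaw: you cannot have inactive set buyers ``receive nothing'' in a Fisher market equilibrium. Every buyer has budget $M_i > 0$, prices are finite, and a demand bundle for an uncapped buyer with positive utility on some good necessarily spends the entire budget on MBB goods; a capped buyer must still receive enough goods to attain its strictly positive cap. So in the yes-case you describe, the $|\mathcal{F}| - q$ non-activated set buyers would still spend money on element goods, disrupting the intended full allocation to the cover. The paper avoids this entirely by reversing roles and using \emph{exactly one} capped buyer: goods correspond to triples, buyers correspond to elements (plus one tiny auxiliary buyer per good and one capped ``decision'' agent). All element and auxiliary buyers are linear, so they always spend their full budget; only the decision agent is capped, and it has a large budget and a tiny per-good utility calibrated so that it reaches its cap $c_{i_d}$ in \emph{every} equilibrium by taking a large proportional share of some $m-n$ goods. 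The question then reduces to how the residual supply is split among linear agents.

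The paper also does not use an ``integrality gadget'' in your sense. Instead, it upper-bounds the welfare of element and auxiliary agents by a concave maximization problem over the residual supplies $s_j$ and element-money $y_j$, and a KKT analysis of that program shows its unique optimizer is integral: exactly $n$ goods receive $s_j = 1$ and $y_j = 3$, which happens iff the instance admits a perfect matching. This is a global argument about the shape of the welfare function, not a local price-rigidifying construction. Your proposal correctly flags that ``the space of feasible allocations is large'' because Theorem~\ref{thm:NPC} ranges over all equilibria, not just modest MBB ones; but without the single-capped-agent trick and the KKT argument, the calibration you defer as ``the delicate quantitative step'' is exactly the part of the proof that is missing, and the ``others receive nothing'' claim in the yes-case as stated contradicts the equilibrium definition.
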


\begin{proof}
	We reduce from {\sc 3-Dimensional Matching}. Consider an instance $I$ composed of three disjoint sets $A$, $B$, $C$ of elements and a set $T \subseteq A \times B \times C$ of triples. Let $n = |A| = |B| = |C|$ be the number of elements in each set and $m = |T|$ the number of triples. W.l.o.g.\ assume $m \ge n$. Now we construct a Fisher market based on $I$ as follows. For each element $i \in A \cup B \cup C$ we introduce an \emph{element agent} $i$ with budget 1. For each triple $j \in T$ we introduce a good $j$ and an \emph{auxiliary agent} $i_j$ with budget 1. All these agents have linear utility functions. In addition, there is a single \emph{decision agent} $i_d$ with a budget-additive utility function and a budget of $4m^2(m-n)$.

For the utility values, for each agent $i \in A \cup B \cup C$ we assume $u_{ij} = 1$ if triple $j$ contains $i$ and 0 otherwise. For auxiliary agent $i_j$ the utility is $u_{i_jj} = 1/m^3$ and 0 for all other goods. Finally, the decision agent $i_d$ has utility $u_{i_dj} = 1/m^3$ for every good $j$ and a cap of $c_{i_d} = (m-n)/(m(m^2+1))$. Our claim is that a market equilibrium with social welfare of $W = 3n\cdot(1/4) + n \cdot(1/4m^3) + (m-n)/m^3$ exists if and only if the instance $I$ has a solution.

First, suppose $I$ has a solution $S \subseteq T$. Then we set the prices to be $p_j = 4$ for every $j \in S$ and $p_j = m^2+1$ for every $j \not\in S$. As for the allocation, each agent $i$ spends its entire budget of 1 on the good $j \in S$ that includes him. Each auxiliary agent spends its budget on the corresponding good. Finally, the decision agent $i_d$ spends a budget of $m^2$ on each of the $m-n$ goods $j \not\in S$. Observe that all goods are allocated, and (since w.l.o.g.\ we can assume $m > 2$) every agent with linear utility function spends its entire budget on an MBB good. The decision agent has optimal utility $(m-n)\cdot (1/m^3) \cdot m^2/(m^2+1) = c_{i_d}$. As such, we obtain a market equilibrium. Straightforward inspection reveals that the social welfare in this state is indeed $W$.

On the other hand, assume that a market equilibrium achieves a social welfare of at least $W$. Note that for each good $j$, the auxiliary agent can at most obtain a utility of $1/m^3$ by getting all of good $j$. Similarly, the decision agent can obtain at most $m$ goods and get a utility of $c_{i_d}$ for all of them. Thus, by giving all goods to auxiliary and decision agents, together they can contribute at most $1/m^2$ to the social welfare.

We first observe that in every market equilibrium the decision agent obtains a utility of $c_{i_d}$. Consider any good $j$ and let us broadly overestimate the price in equilibrium by assuming that the auxiliary agent and all three element agents spend a total budget of 4 on $j$. This is clearly an upper bound on the money that is spent by the element and auxiliary agents on good $j$. To derive an upper bound, assume this happens on every good $j$. Even in this case, the decision agent can contribute a budget of $4m^2$ to any set of $m-n$ goods. Since in a market equilibrium, the goods must be shared in proportion to the money spent, the decision agent would thereby be able to obtain a share of $4m^2/(4m^2+4) = m^2/(m^2+1)$ from each good it contributes to. In total this yields a utility of $(m-n) \cdot (m^2/(m^2+1)) \cdot (1/m^3) = c_{i_d}$. Hence, in every market equilibrium the decision agent obtains at least a total share of $(m-n)\cdot m^2/(m^2+1)$ of all the goods. Thus, the total remaining supply of goods that can be allocated to the remaining agents is at most $n + (m-n)/(m^2+1)$. 

Let us now discuss how to distribute this remaining supply optimally among the agents. For every good $j$, any equilibrium allocation must be proportional to the incoming money. We remove the fraction obtained by the decision agent, denote the remaining supply by $s_j$, and note $s_j \ge 0$ and $\sum_j s_j \le n + (m-n)/(m^2+1)$. The auxiliary agent always spends its budget of 1 on $j$. Let $y_j$ be the money spent by element agents on good $j$, so $3 \ge y_j \ge 0$ and $\sum_j y_j = 3n$. The welfare obtained from good $j$ by auxiliary and element agents in any equilibrium is 
\[
	s_j\left(\frac{y_j}{y_j+1} + \frac{1}{y_j+1} \cdot \frac{1}{4m^3}\right)\enspace.
\]
Hence, the social welfare obtained by element and auxiliary agents in any market equilibrium is upper bounded by the optimum solution to the following optimization problem:
\begin{equation}\label{pro:NPC}
	\begin{array}[4]{rrll}
	\text{Max.} & \multicolumn{3}{l}{\D \sum_{j \in [m]} s_j \frac{y_j + 1/(4m^3)}{y_j+1} \vspace{0.4cm}}\\
	\text{s.t.} & \D \sum_{j \in [m]} s_j & = n + \frac{m-n}{m^2+1} & \vspace{0.2cm}\\
							& \D \sum_{j \in [m]} y_j & = 3n & \vspace{0.2cm}\\
 							& y_j \le 3 & \forall j \in [m]\vspace{0.2cm}\\
 							& s_j \le 1 & \forall j \in [m]
	\end{array}
\end{equation}
The objective function is linear in the $s_j$ and convace in the $y_j$, the constraints are concave, the equality constraints are affine and their gradients are linearly independent. The feasible solution $y_j = 3/m$ and $s_j = (n + (m- n)/(m^2 + 1))/m$ satisfies the inequality constraints with strict inequality. Hence, the KKT-conditions characterize the unique optimal solution. We use dual variables $\alpha$ and $\beta$ for the equality constraints, $\lambda_j$ and $\mu_j$ for the inequality constraints. The optimal solution must satisfy
\begin{align*}
- \frac{y_j + 1/(4m^3)}{y_j + 1} + \alpha + \mu_j &= 0\\
- s_j \frac{1 - 1/(4m^3)}{(y_j + 1)^2} + \beta + \lambda_j &= 0\\
\lambda_j (y_j - 3) &= 0 \text{ and } \lambda_j \ge 0 \text{ for all $j$}\\
\mu_j (s_j - 1) &= 0 \text{ and } \mu_j \ge 0 \text{ for all $j$}
\end{align*}
Thus, $\alpha \le  (y_j + 1/(4m^3))/({y_j + 1})$ for all $j$. Note that $\alpha < (y_j + 1/(4m^3))(y_j + 1)$ implies $\mu_j > 0$ and $s_j = 1$. Also $\beta_j \le s_j (1 - 1/(4m^3))/(y_j + 1)^2$ for all $j$. Similarly, $\beta_j <  s_j (1 - 1/(4m^3))/(y_j + 1)^2$ implies $\lambda_j > 0$ and $y_j = 3$. We number the $y$'s such that $y_1 \ge y_2 \ge \ldots \ge y_m$. Let $\ell$ be such that $y_1 \ge \ldots \ge y_{\ell} > y_{\ell+1} = \ldots = y_m$. Then $s_j = 1$ for $1 \le j \le \ell$ and hence $\ell \le n$. Let $k \le \ell$ be such that $y_1 = \ldots = y_k > y_{k+1}$. 
Then $y_j = 3$ for $1 \le j \le k$. If $k < \ell$, $y_{k+2} > 0$ since $\sum_j y_j = 3n$, and we increase the objective by increasing $y_{k+1}$. Thus $k = \ell$. If $\ell < n$, $s_{\ell + 2} > 0$ and we increase the objective by increasing $s_{\ell+1}$ and $y_{\ell +1}$. Thus $\ell = n$, and the unique optimum is $y_1 = \ldots = y_n = 3$, $y_{n+1} = \ldots = y_m = 0$, $s_1 = \ldots = s_n = 1$. 

This proves that in the optimum there are $n$ goods to which the decision player does not contribute ($s_j = 1$) and for which there are exactly three element players that can contribute all their budget to this good ($y_j = 3$). Thus, the upper bound on the social welfare is attained only when the decision player contributes to exactly $m-n$ goods such that the remaining $n$ goods correspond to a partition of the $3n$ agents into $n$ disjoint triples. By straightforward inspection, we see that the upper bound on the social welfare amounts to exactly $W$. A market equilibrium of social welfare $W$ can exist only if there is a solution to the underlying instance $I$. This concludes the proof.
\end{proof}
As a corollary, we note that the proof can also be used to show \classNP-hardness for optimizing any constant norm of utility values.
\begin{corollary}
	\label{cor:NPC}
	It is \classNP-hard to compute a market equilibrium $(\vecx,\vecp)$ that maximizes $\sum_i (u_i(\vecx))^\rho$, for every constant $\rho > 0$.
\end{corollary}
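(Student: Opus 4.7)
The plan is to reuse the reduction from the proof of Theorem~\ref{thm:NPC} verbatim and merely replace the target value by the $\rho$-norm of the utility vector attained in the matching-induced equilibrium. Given an instance $I$ of 3-Dimensional Matching, the market is identical to the one built in Theorem~\ref{thm:NPC} and the threshold becomes
\[
W_\rho \;:=\; 3n\,(1/4)^\rho + n\,(1/(4m^3))^\rho + (m-n)\,(1/(m^3(m^2+1)))^\rho + c_{i_d}^{\,\rho},
\]
where $c_{i_d} = (m-n)/(m(m^2+1))$. For the completeness direction, a solution $S$ of $I$ induces exactly the same equilibrium as in Theorem~\ref{thm:NPC}, whose buyers have utilities $1/4$ (element agents), $1/(4m^3)$ and $1/(m^3(m^2+1))$ (auxiliary agents inside resp.\ outside $S$), and $c_{i_d}$ (decision agent); its $\rho$-norm equals $W_\rho$ by direct calculation.

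For the soundness direction I would mirror the three structural steps of Theorem~\ref{thm:NPC}. First, in every market equilibrium the decision agent still attains its cap $c_{i_d}$, by the same budget argument, so it contributes $c_{i_d}^{\,\rho}$ to the $\rho$-norm independently of $\rho$. Second, letting $y_j$ denote the total budget spent by element agents on good $j$ and $s_j$ the residual supply of $j$ after the decision agent's share, the $\rho$-norm contribution of element and auxiliary agents on good $j$ becomes
\[
\bigl(y_j + m^{-3\rho}\bigr)\bigl(s_j/(y_j+1)\bigr)^{\rho},
\]
subject to the same constraints $\sum_j y_j = 3n$, $y_j \le 3$, $\sum_j s_j \le n + (m-n)/(m^2+1)$, and $s_j \le 1$ as derived in Theorem~\ref{thm:NPC}. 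Third, I would verify via KKT that this objective is uniquely maximized at a profile with exactly $n$ goods having $(y_j, s_j) = (3, 1)$ and the rest $y_j = 0$, a configuration realizable only when $I$ admits a matching; plugging such a profile back in yields exactly $W_\rho$.

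The main obstacle is the KKT step for arbitrary $\rho > 0$. In Theorem~\ref{thm:NPC} the objective is linear in $s_j$ and strictly concave in $y_j$, which pins down the optimum at the matching configuration with almost no case analysis. For $\rho \neq 1$ the curvature differs; in particular, for $\rho > 1$ the univariate map $y \mapsto y/(y+1)^\rho$ is non-monotone on $[0,3]$, so one cannot simply push $y_j$ to its upper bound by inspecting first-order conditions. However, the dominant term $3n\,(1/4)^\rho$ from the element agents scales with $n$, whereas the available slack $\sum_j s_j - n \le (m-n)/(m^2+1) = O(1/m)$ is vanishingly small, so any putative gain from spreading $y_j$ away from the concentrated profile is strictly dominated by the corresponding loss in the $s_j$-weighted terms. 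A routine but careful case analysis, treating $\rho \le 1$ (where $x \mapsto x^\rho$ is concave) and $\rho > 1$ (where it is convex) separately and exploiting that $m$ is polynomial in the input while $\rho$ is a fixed constant, then confirms that the concentrated profile is the unique maximizer and that $W_\rho$ is attainable if and only if $I$ has a matching.
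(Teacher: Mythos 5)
Your proposal tracks the paper's strategy at a high level---reuse the Theorem~\ref{thm:NPC} reduction verbatim, adjust the threshold to the $\rho$-norm $W_\rho$ of the matching-induced equilibrium, and argue that the optimum of the relaxed optimization still coincides with the matching profile---and your threshold computation is right. You also correctly flag that the relaxed optimization becomes delicate for $\rho>1$ because $y\mapsto y/(y+1)^\rho$ is not monotone. However, the informal argument you offer to dismiss this concern does not close it, and the gap is real.

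Concretely, consider the relaxed problem with objective $\sum_j (y_j + m^{-3\rho})\bigl(s_j/(y_j+1)\bigr)^\rho$ and constraints $\sum_j y_j = 3n$, $y_j\le 3$, $\sum_j s_j \le n + (m-n)/(m^2+1)$, $s_j\le 1$. The matching profile ($y_j=3$, $s_j=1$ on $n$ goods) gives value $\approx 3n\cdot 4^{-\rho}$. But the feasible profile $y_1=\cdots=y_n=1$, $s_1=\cdots=s_n=1$, with the remaining $2n$ units of $y$-budget dumped on goods $j>n$ (which have $s_j = O(1/m)$), gives value $\approx n\cdot 2^{-\rho}$, and $n\cdot 2^{-\rho} > 3n\cdot 4^{-\rho}$ for every $\rho > \log_2 3 \approx 1.585$. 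So for such $\rho$ the relaxed optimum strictly exceeds $W_\rho$ and the matching profile is \emph{not} its maximizer. Intuitively, for $\rho>1$ the $\rho$-norm rewards concentrating the element agents' money so that a few of them get share $1/2$ while the others get almost nothing, even though this lowers the total share. Your closing argument is aimed at the tiny slack in the $s_j$-budget, but the damaging move lowers $y_j$ on goods with $s_j=1$, which the $s_j$-slack bound does not constrain. Hence the relaxation alone cannot give the soundness direction for large constant $\rho$; one needs additional equilibrium structure not encoded in the relaxation---most importantly, that the decision agent is an MBB buyer and therefore buys the cheapest goods, so a good priced at $2$ cannot retain residual supply $s_j=1$. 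The paper's one-line remark that ``we still want to maximize the share of goods assigned to the element agents'' is gesturing at exactly that extra structure, but note that this remark, read literally, is also insufficient, since maximizing total element share is not the same as maximizing $\sum_i u_i^\rho$ once $\rho>1$.
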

\begin{proof}
For $\rho > 1$, we can use exactly the same reduction. The optimum coincides with the optimum for social welfare, since we still want to maximize the share of goods assigned to the element agents. For constant $0 < \rho < 1$ and sufficiently large $m$, the common factor $1/(4m^3)$ is strong enough to keep the incentive of maximizing the share of the element agents. 
\end{proof}
There are several ways of introducing satiation points into the utility function. Instead of a global cap, let us assume there is a cap $c_{ij}$ for the utility buyer $i$ can obtain from good $j$. A good-based budget-additive utility of buyer $i$ is then $u_i(\vecx_i) = \sum_{j} \min(c_{ij}, u_{ij}x_{ij})$. This variant turns out to be an elementary special case of separable piecewise-linear concave (SPLC) utilities, in which every piece consists of a linear segment followed by a constant segment. We show that even finding a single market equilibrium here becomes \classPPAD-hard.
\begin{theorem}
	\label{thm:PPAD}
	It is \classPPAD-hard to compute a market equilibrium in Fisher markets with good-based budget-additive utilities.
\end{theorem}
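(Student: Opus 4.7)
The plan is to reduce from the PPAD-hard problem of computing an equilibrium in Fisher markets with general SPLC utilities, whose hardness was established in prior work of Chen--Teng and Vazirani--Yannakakis. Good-based budget-additive utilities are exactly the subclass of SPLC where every buyer--good pair admits only two segments: a positive-slope linear piece followed by a flat piece. So the task is to simulate general SPLC utility functions using only two-piece, good-based budget-additive utilities.

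I would start from the elementary decomposition
\[
\phi_{ij}(x) = \sum_{l=1}^K s_l \cdot \min\bigl(x^{(l)},\, b_l - b_{l-1}\bigr),
\]
where $\phi_{ij}$ has breakpoints $0 = b_0 < b_1 < \dots < b_K$ with slopes $s_1 > s_2 > \dots > s_K > 0$, and $x^{(l)}$ is the amount attributed to segment $l$ (summing to $x$, filled in order of decreasing slope). For each SPLC good $j$ I would introduce $K$ layer-goods $j^{(1)}, \dots, j^{(K)}$ with supplies $b_l - b_{l-1}$, and give every buyer $i$ the good-based budget-additive parameters $u_{i, j^{(l)}} = s_l$ and $c_{i, j^{(l)}} = s_l\,(b_l - b_{l-1})$ on each layer. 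Then $i$'s cap on layer $l$ is hit precisely when that layer is fully allocated to $i$, so $i$'s marginal willingness to pay drops from $s_l$ to $s_{l+1}$ at the right moment, mirroring the SPLC utility.

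The key structural requirement is that all layer-copies of the same underlying SPLC good $j$ behave as a single good with respect to price -- in particular, they should carry equal price in equilibrium, so that every buyer's MBB edge naturally targets the highest-slope layer first. I would enforce this through tying gadgets: for each original good $j$, introduce auxiliary dummy buyers whose good-based budget-additive utilities value every layer of $j$ equally, and whose budgets and caps are chosen so that at every equilibrium they are uncapped and spend a strictly positive share of their budget on each layer. The MBB condition for such a dummy then equalises the prices across the layers of $j$. Given price equalisation, the aggregation $x_{ij} = \sum_l x_{i, j^{(l)}}$ converts equilibria of the reduced market back into SPLC equilibria, and the reverse direction (lifting an SPLC equilibrium to the layered market by consuming layers in slope order) is routine.

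The main obstacle is designing the tying gadgets so that they work uniformly across \emph{all} equilibria of the reduced market, not just the intended ones. The dummy buyers' budgets, utilities, and caps must be set delicately so that (i) they are always uncapped, (ii) they always demand a positive amount from every layer, and (iii) no spurious equilibrium is introduced in which layers are consumed out of order or in which the dummy uncouples from some layer. The case in which different real buyers have different breakpoint structures on the same original good $j$ is especially subtle; one natural remedy is to use per-buyer layer-copies connected through additional dummy structure, but verifying that no unintended equilibria survive is where the bulk of the proof effort will lie. Polynomiality of the construction is immediate once the gadgets are specified, since $K$ is bounded by the description length of $\phi_{ij}$.
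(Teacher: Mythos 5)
Your plan differs fundamentally from the paper's proof, which does not attempt a generic simulation of SPLC by good-based budget-additive utilities. Instead, the paper opens up the Chen--Teng price-regulating market $\M_n$ and modifies it by hand: since the critical two-segment utility (slope $4$ then slope $1$) loses one side of the price-regulation inequality $\nfrac{1}{2}\le p_{2k-1}/p_{2k}\le 2$ when the second slope is set to $0$, the paper replaces each buyer $k$ by a \emph{symmetric pair} $(k,1),(k,2)$ and doubles the good supplies, so that the two buyers jointly recover both inequalities. The game-embedding buyers are handled by observing that in Chen--Teng's construction they never consume on the second segment, so flattening that segment to zero slope is harmless.

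The gap in your proposal is exactly the step you flag as ``the bulk of the proof effort'': the tying gadgets. You assert the existence of dummy buyers whose budgets and caps guarantee that, in \emph{every} equilibrium, they are uncapped and spend a strictly positive amount on \emph{every} layer of good $j$, so that the MBB condition forces equal layer prices. But an uncapped buyer who values all layers of $j$ equally spends only on the layers of \emph{minimum} price; nothing in the setup forces it to touch the expensive layers at all. Consider an equilibrium where $p_{j^{(1)}}/p_{j^{(2)}}\in(1,\,s_1/s_2)$: the dummy then spends only on layer~$2$, while a real buyer (for whom layer~$1$ still has the better bang-per-buck ratio $s_1/p_{j^{(1)}}>s_2/p_{j^{(2)}}$) can absorb layer~$1$, and the market clears with unequal layer prices. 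Ruling this out requires making the dummy's budget large enough that it \emph{must} spill over onto every layer, which injects a dominant amount of money into the market and threatens to wash out the price-regulation property on which the Chen--Teng embedding of the two-player game depends --- there is a real tension between ``dummy large enough to equalize'' and ``dummy small enough not to perturb.'' A second, independent difficulty you acknowledge but do not resolve is that the layer decomposition is per buyer (different breakpoint vectors $b^{(i)}_l$ on the same good $j$), so there is no single set of layer-goods with the right supplies; per-buyer copies then need their own tying machinery, compounding the problem. As written, the reduction is a programme rather than a proof, and the paper's more surgical modification of $\M_n$ sidesteps both obstacles entirely.
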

\begin{proof}
We adapt the construction of Chen and Teng~\cite{ChenT09} to prove the theorem. They show \classPPAD-completeness of computing an approximate equilibrium in Fisher markets under SPLC utilities where each PLC function has at most two segments. Here, the second segment can have positive rate of utility, i.e., non-zero slope, hence \classPPAD-hardness for Fisher markets under good-based budget-additive utilities where the second segment has zero slope, i.e., no utility, requires adjustment in their construction. 

Chen and Teng~\cite{ChenT09} reduce the \classPPAD-hard problem of computing an approximate Nash equilibrium in a two-player game to the problem of computing an approximate equilibrium in Fisher markets under SPLC utilities. Their main idea is to construct a family of {\em price-regulating} markets $\M_n$ for each $n\ge 1$, which has $n$ buyers and $2n$ goods. In $\M_n$, each buyer has budget of $3$ units and each good has supply of $1$ unit, and every approximate equilibrium price vector $\p$ satisfies the following {\em price-regulation} property: 
\begin{equation}
\label{eqn:prp}
\frac{1}{2} \le \frac{p_{2k-1}}{p_{2k}} \le 2 \ \ \ \ \ \mbox{and} \ \ \ \ \  p_{2k-1} + p_{2k} \approx 3 \ \ \ \mbox{for
every} \ \ 1\le k\le n. 
\end{equation}
Next for a given two-player game, additional buyers are inserted in the price-regulating market and game parameters are embedded into their budget and utility functions. These new buyers are given very small budget so that the price-regulation property is still satisfied. 

First, we modify the family of price-regulating markets $\M_n$ for each $n\ge 1$ so that each PLC function is either linear or linear with a threshold. In the construction of~\cite{ChenT09}, each buyer $k$ derives non-zero utility only from goods $2k-1$ and $2k$. Its utility function for good $2k-1$ is linear with slope 2 (utility per unit amount), and for good $2k$ it is linear with slope 4 till unit amount and then linear with slope $1$. Since the slope of the second segment is $1$, it is not good-based budget-additive utility function. Simply decreasing the slope of the second segment from $1$ to $0$ does not work. We get only one inequality: 
\[\frac{1}{2} \le \frac{p_{2k-1}}{p_{2k}}.\]
To construct a correct reduction, we use two buyers, say $(k,1)$ and $(k,2)$, instead of one buyer $k$. We set the supply of each good to $2$ units instead of $1$. Both buyers $(k,1)$ and $(k,2)$ have budget of $3$ units each, and both derive non-zero utility only from goods $2k-1$ and $2k$. We set the utility function of buyer $(k,1)$ as follows: For good $2k-1$, it is linear with slope $2$, and for good $2k$, it is linear with slope $4$ till unit amount and then linear with zero slope. Similarly, the utility function of buyer $(k,2)$ is set as follows: For good $2k$, it is linear with slope $2$, and for good $2k-1$, it is linear with slope $4$ till unit amount and then linear with zero slope. We claim that this enforces the price-regulation property (\ref{eqn:prp}) on every equilibrium price vector $\vecp$. 

Suppose $p_{2k-1}/p_{2k} > 2$ then buyer $(k,2)$ demands only good $2k$. This results in more demand of good $2k$ and less demand of good $2k-1$, hence does not give an equilibrium. Similarly, we get contradiction for the case $p_{2k-1}/p_{2k} < 1/2$. When $\frac{1}{2} \le \frac{p_{2k-1}}{p_{2k}} \le 2$, then buyer $(k,1)$ demands one unit of good $2k-1$ and one unit of good $2k$, and the same for buyer $2k$. This yields an equilibrium. 

Next, for the additional buyers who embed the game parameters, we simply change the slope of the second segment from positive to zero for each utility function. We claim that this works because these buyers do not buy any good on the second segment in the original construction of~\cite{ChenT09}. Hence, it has no effect on equilibrium when the slope of the second segment is decreased. This concludes the proof. 
\end{proof}

\bibliography{../../../Bibfiles/literature,../../../Bibfiles/martin}
\end{document}